\newtheorem{theorem}{Theorem}[section]
\newtheorem{proposition}[theorem]{Proposition}
\newtheorem{remark}[theorem]{Remark}
\newtheorem{definition}[theorem]{Definition}
\newtheorem{example}[theorem]{Example}
\numberwithin{equation}{section}
\numberwithin{figure}{section}
\newcommand{\BM}{{\mathbb B}}
\newcommand{\CM}{{\mathbb C}}
\newcommand{\NM}{{\mathbb N}}
\newcommand{\QM}{{\mathbb Q}}
\newcommand{\RM}{{\mathbb R}}
\newcommand{\ZM}{{\mathbb Z}}
\newcommand{\KM}{{\mathbb K}}
\newcommand{\UM}{{\mathbb U}}
\newcommand{\FM}{{\mathbb F}}
\newcommand{\Dd}{{\mathcal D}}
\newcommand{\Tt}{{\mathcal T}}
\newcommand{\Rr}{{\mathcal R}}
\newcommand{\Cc}{{\mathcal C}}
\newcommand{\Ll}{{\mathcal L}}
\newcommand{\Hh}{{\mathcal H}}
\begin{document}
\title[]{Topological Dynamics of Synthetic Molecules}

\author{Yuming Zhu}

\address{Bronx High School of Science, \\
Bronx, NY 10016, USA \\
\href{mailto:zhuy3@bxscience.edu}{zhuy3@bxscience.edu}}

\author{Emil Prodan}

\address{Department of Physics and
\\ Department of Mathematical Sciences 
\\Yeshiva University 
\\New York, NY 10016, USA \\
\href{mailto:prodan@yu.edu}{prodan@yu.edu}}

\date{\today}

\begin{abstract} 
We study the dynamics of synthetic molecules whose architectures are generated by space transformations from a point group acting on seed resonators. We show that the dynamical matrix of any such molecule can be reproduced as the left regular representation of a self-adjoint element from the stabilized group's algebra. Furthermore, we use elements of representation theory and K-theory to rationalize the dynamical features supported by such synthetic molecules up to topological equivalences. These tools enable us to identify a set of fundamental models which generate by superposition all possible dynamical matrices up to homotopy equivalences. Interpolations between these fundamental models give rise to topological spectral flows.
\end{abstract}

\thanks{This work was supported by the U.S. National Science Foundation through the grant CMMI-2131760 and by U.S. Army Research Office through contract W911NF-23-1-0127.}

\maketitle


\setcounter{tocdepth}{1}

\section{Introduction}
\label{Sec:Introduction}

The discovery of topological insulators \cite{Hal1988,KaneMele2005I,KaneMele2005II,BHZ2006,KWB2007,
MooreBalents2007,FuKane2007,HQW2008} and of topological photonic and mechanical systems \cite{HaldaneRaghu2008,PP2009,WangNature2009,KaneNatPhys2013,
HafeziNatPhot2013,NashPNAS2015,WuPRL2015,SusstrunkScience2015}, together with their complete classification \cite{SRFL2008,QiPRB2008,Kit2009,RSFL2010}, have permanently changed how research and discovery are conducted in materials science. For example, the focus has shifted from enhancing the properties of a material to making it different. Furthermore, many applications rely now on robust effects that do not require fine tuning, such as generating wave channels at the interface between two topologically distinct materials, edge-to-edge topological pumping, or closing and opening of the resonant gaps by interpolating between two distinct material configurations. The latter application is called topological spectral engineering and we will see it at work in the present study. Besides supplying the means for important applications, observing a topological spectral flow is the simplest way to assess if two metamaterials are topologically distinct (see {\it e.g} \cite{LuxJMP2024}). These aforementioned robust effects are applicable to information and sensing technologies, where the main function of a device is to switch between on and off states as robustly as possible. 

Physical models in condensed matter physics are developed on periodic lattices $\Ll$ and most of the time they take the form of tight-binding Hamiltonians 
\begin{equation}\label{Eq:H}
H = \sum_{x,y\in \Ll} t_{xy} \otimes |x\rangle \langle y| + t_{xy}^\ast \otimes |y\rangle \langle x|,
\end{equation}
where the intuition is that an electron hops between the orbitals supported by the sites of the lattice, with probabilities encoded in the matrices $t_{xy}$. In particular, the Hamiltonians of the established topological insulators are of this form (see \cite{RSFL2010} for complete list). If one moves away from this context, unless specialized mathematical tools are utilized, it is not clear what a topological classification might be, how to write {\it all} representative Hamiltonians, and how to compute topological invariants. For example, statements like ``two spectrally gappped Hamiltonians are topologically equivalent if we can deform one into the other without closing the gap" are without value if the space in which the deformations occur is not specified. Imposing symmetry constraints on the Hamiltonians is still not enough because locality conditions on the hopping amplitudes are also needed. In a remarkable paper \cite{Bellissard1986}, Jean Bellissard realized that the full set of constraints can be communicated by simply stating that the Hamiltonians belong to a particular $C^\ast$-algebra. The isomorphism class of this $C^\ast$-algebra identifies a specific class of materials and a concrete realization of it supplies the topological space for the deformations. These statements of Bellissard are not purely formal because, in the same paper, he showed how to construct such $C^\ast$-algebras (see also \cite{MeslandJGP2024}). Once the deformation space is identified, then various flavors of stable homotopy theories, aka K-theories, supply the devices that answer the classification problems.

To the experts, the above statements and their implications are very clear, but this is not at all the case for a broader community of materials scientists, despite of many worked examples available in the literature. The authors believe that this current state of affairs is due to the fact that since these mentioned applications targeted challenging situations, there was no room for pedagogical expositions. As a result, the connections between the K-theoretic methods and other more familiar methods have not been revealed explicitly enough. The purpose of this expository paper is to fill in some of the mentioned shortcomings. The plan is to engage simple synthetic systems built from a finite number of self-coupling resonators and whose architecture is generated using the action of a finite subgroup of the full Euclidean group, hence, a point group (see section~\ref{Sec:SyntheticMolecules}). The dynamical matrices determining the dynamical linear regimes of such systems all fall into the group algebra of the point group. As we shall see, we are in a situation where the representation theory of this group and the K-theory of its algebra overlap quite strongly (see section~\ref{Sec:AlgTopDyn}). Thus, we are in a special situation where the K-theoretic tools can be seen at work through the prism of representation theory. For a community familiar with the latter, this class of physical systems and their analysis can be used as a door into the world of the K-theoretic ideas, which divert from representation theory as soon as the number of degrees of freedom become infinite. 

The exposition is organized as follows. Section~\ref{Sec:SyntheticMolecules} describes quantum and classical materials generated from actions of the Euclidean group on seeding synthetic atoms (such as quantum dots) or self-coupling classical resonators (such as solid shapes fitted with magnets), respectively. We use the terms synthetic molecules to refer to both these systems. Using simple physical reasoning, we calculate the algebra of the Hamiltonians. In the case where all the actions come from a subgroup of the Euclidean group, we find that this algebra drops to the group algebra of the space transformations. This, together with a natural topology on the algebra, completes the task of finding the deformation space for this particular class of synthetic molecules. Section~\ref{Sec:AlgTopDyn} explains the interplay between representation theory and K-theory and demonstrates how numerical invariants are computed in general and in particular for setting of synthetic molecules. Section~\ref{Sec:FundModels} builds a set of Hamiltonians derived from K-theoretic data. Any other Hamiltonian can be deformed to a direct sum of Hamiltonians from this list. Furthermore, we demonstrate that interpolations between pairs of Hamiltonians from the list result in topological spectral flows, hence confirming that they are topologically distinct.

\section{Synthetic molecules}
\label{Sec:SyntheticMolecules}

\subsection{Building architectures with space transformations}

\begin{figure}[t]
\begin{center}
\hspace{10cm}
\includegraphics[width=0.8\textwidth]{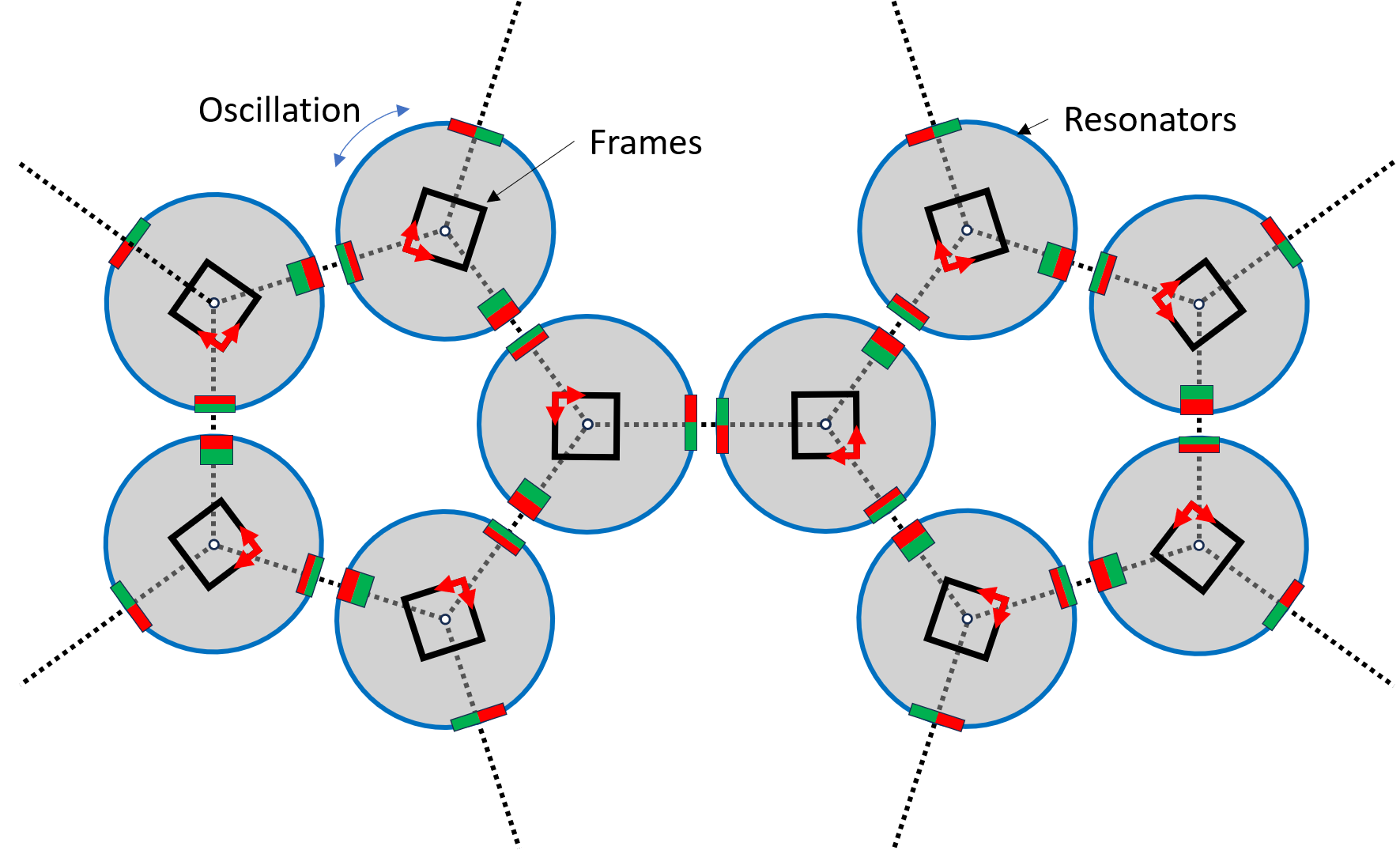}
  \caption{\small A resonator design that leads to a stable equilibrium configuration symmetric to the icosahedron group. 
}
 \label{Fig:SymCp}
\end{center}
\end{figure}

For simplicity, we describe these synthetic materials in a classical mechanical setting, but all our predictions apply in the acoustic, photonic and quantum settings as well. The synthetic mechanical materials that we have in mind are built from many copies of a single seed resonator. The seed resonator consists of moving parts, a physical frame and sources of force fields, such that: 

\begin{itemize}
     \item The frame of the seed resonator can be decorated with sources of force fields, {\it e.g.} magnets, in order to adjust its internal resonant modes and frequencies;
    \item The frame of the seed resonator can also be fitted with sensors in order to quantify the motion of its moving parts. 
    \item The frames of the copies\footnote{A copy of the seed resonator will include the frame, the moving parts and all the other fittings.} can be rigidly anchored once their positions and orientations are decided;
    \item The moving parts of the identical copies of the seed resonators self-couple via the force fields generated by the attached sources of force fields.
\end{itemize}

A schematic example of such synthetic material is shown in Fig.~\ref{Fig:SymCp}, and an actual laboratory model is shown and described in Fig.~\ref{Fig:RealSynthMolecule}. We carefully listed our assumptions because they have several important consequences:
\begin{itemize}
    \item The physical frames and their fitted sensors supply local reference frames, marked by the red arrows in Fig.~\ref{Fig:SymCp}, from where the motion is quantified. Thus, if $\{q_\alpha\}$ are the generalized degrees of freedom of the seed resonator and $q_i(t)$ are their recorded values as function of time in an experiment, then same recordings will be reported if same experiment is carried out with the seed resonator placed at a different location and with different orientation.
    \item The collective motion of the self-coupled moving parts is fully determined by the positions and orientations of the frames.
    \item The motion, as quantified, or better said reported from the local frames, is unaffected by rigid translations and rotations of the whole structure, which is a consequence of Galilean invariance of the physical laws involved in the couplings.
\end{itemize}

We will be dealing only with linear dynamical regimes, which requires the existence of stable equilibrium configurations. Thus, we assume:
\begin{itemize}
    \item For each spatial configuration of the frames of the resonators, there is a unique lowest energy configuration.
\end{itemize}
This assumption needs to be true only for the set of configurations sampled during an experiment or for the configurations engaged during an application. 

\begin{figure}[t]
    \centering
    \includegraphics[width=0.4\linewidth]{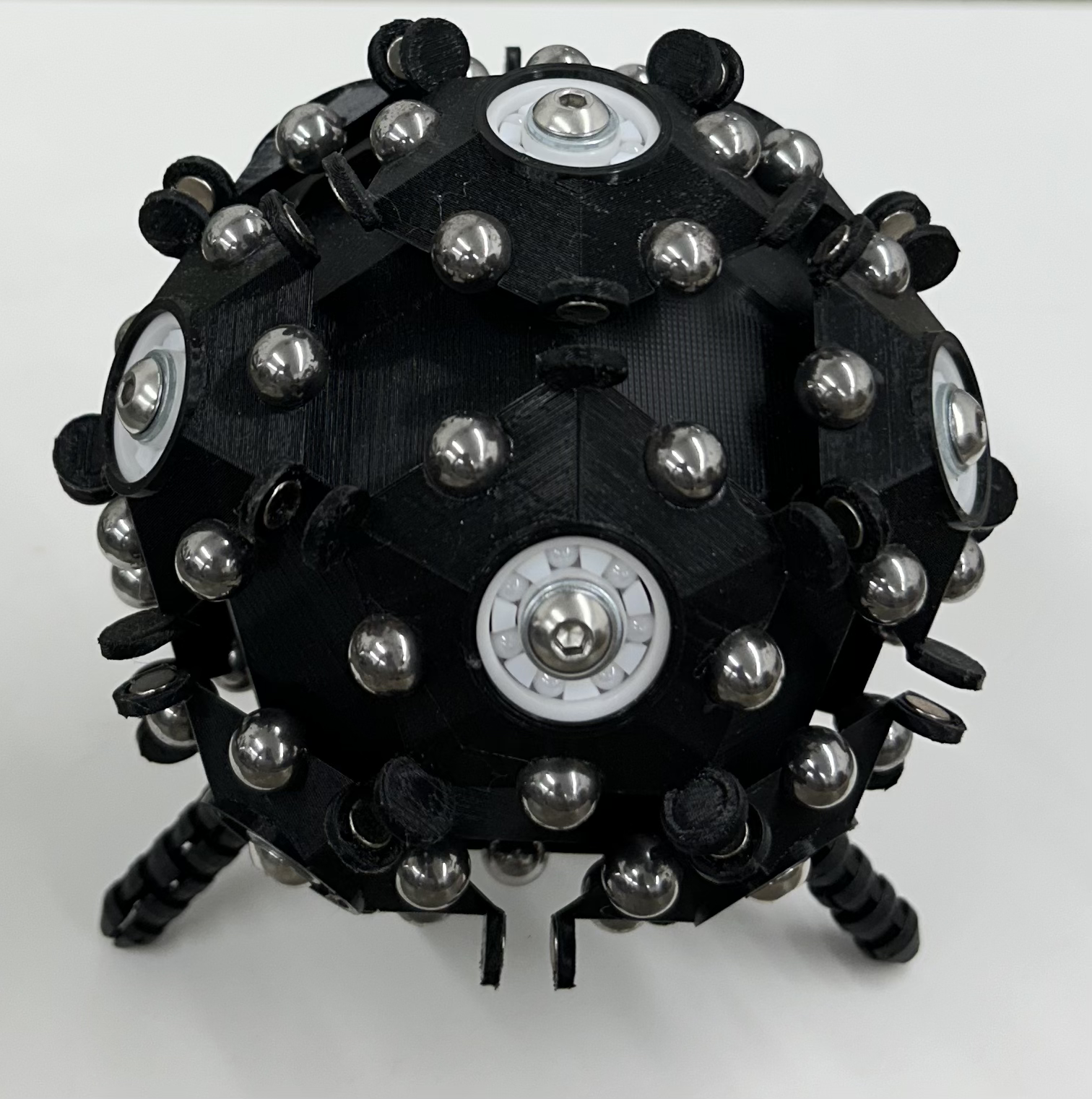}
    \caption{Example of a fully assembled and functional synthetic molecule.}
    \label{Fig:RealSynthMolecule}
\end{figure}

A somewhat trivial observation supplies the key to the operator algebraic approach we want to describe. Specifically, the position and orientation of the frame of a resonator can be set by a rotation followed by a translation of the seed resonator. These two space transformations define an element of the Euclidean group, which transforms the local reference frame of the seed resonator\footnote{All identical copies are generated from a seed resonator that is fixed once and for all at a specific location and orientation in space.} into the local reference frame of an actual resonator. As such, there is a one-to-one correspondence between a resonator and an element of the Euclidean group. We reached our first important conclusion:

\begin{proposition}[\cite{MeslandJGP2024}]\label{Prop:Architecture1}
    Let us use the terms ``synthetic molecules" for clusters of self-coupled resonators and let us define the architecture of the synthetic molecule to be the information contained in the positions and orientations of the frames of the resonators. Then the architecture can be conceptualized as a discrete subset or better said a lattice $\Ll$ of the Euclidean group of space transformations.
\end{proposition}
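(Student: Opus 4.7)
The plan is to exhibit an explicit injection from the set of resonators of a synthetic molecule into the Euclidean group $E(3) = O(3) \ltimes \RM^3$, and then to verify that the image is a discrete subset. The argument is essentially an unpacking of the observations laid out just before the statement, but organized to make visible where each of the listed assumptions is used.

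First, I would fix the seed resonator once and for all at a chosen reference position and orientation, which pins down a distinguished local reference frame $F_0$ (say an oriented orthonormal basis at a chosen base point). Given any copy of the seed resonator present in the synthetic molecule, its physical frame $F$ occupies a definite position and orientation in space. Since $E(3)$ acts freely and transitively on the set of oriented orthonormal frames in $\RM^3$, there is a unique element $g \in E(3)$ with $g \cdot F_0 = F$. Assigning $g$ to the resonator defines a map $\iota$ from the set of resonators into $E(3)$.

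Next, I would check that $\iota$ is injective and that its image is discrete. Injectivity rests on the rigidity assumption: distinct resonators are physically distinct copies of the seed, each rigidly anchored to its frame, so two resonators sharing the same Euclidean element would occupy the same region of space, which is excluded. For discreteness, I would use that each frame has strictly positive spatial extent and that frames of different copies cannot interpenetrate; equipping $E(3)$ with its standard Lie group topology, the resulting uniform spatial separation translates to a positive lower bound on distances between elements of $\iota(\text{resonators})$. Setting $\Ll := \iota(\text{resonators})$ then gives the claimed discrete subset.

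The main conceptual obstacle, rather than a computational one, is making sure the correspondence respects orientation. If the reference frame $F_0$ were treated as an unoriented rigid body, then reflections and relabelings of axes would spoil the uniqueness of $g$ and the map $\iota$ would fail to be well defined. The fix is to demand from the outset that $F_0$ carries a chosen handedness and an ordering of basis vectors, so that the action of $E(3)$ on oriented frames is genuinely free and transitive. Once this is in place, the rest of the argument is a direct translation of the physical assumptions into the statement that $\Ll \subset E(3)$ is a discrete set in bijection with the resonators of the molecule.
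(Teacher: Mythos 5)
Your argument is correct and is essentially the same as the paper's (informal) justification: the paper likewise observes that each resonator's frame is obtained from the seed resonator's fixed reference frame by a unique rotation-plus-translation, yielding a one-to-one correspondence between resonators and elements of the Euclidean group, with discreteness coming from the physical separation of the finitely many rigid copies. Your added care about orienting and ordering the basis vectors of $F_0$ is a reasonable sharpening but does not change the route.
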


\begin{remark}{\rm
    The Euclidean group and its subgroup of rotations will play active and passive roles in our discussion. For example, these groups will supply space transformations on $\Ll$, but will also supply the labels (or the coordinates) for the points in $\Ll$. When appearing with an active role, the elements of the groups are indicated by $g$, and otherwise by $x$.
    }$\Diamond$
\end{remark}

We recall that the Euclidean group $\bm E$ is a topological group and that, for the Euclidean space with 3-dimensions, the underlying topological space is
\begin{equation}\label{Eq:ETopo}
    \bm E \simeq \RM^3 \times \RM \mathbb P^3 \times \{-1,1\}.
\end{equation}
Note that $\RM \mathbb P^3 \times \{-1,1\}$ is the topological space underlying the subgroup $O(3)$ of rotations, with $\pm 1$ indicating if a rotation is proper or improper, respectively. According to our statement from above, the architecture of a synthetic molecule containing $N$ resonators can be quantified as a subset of $N$ distinct points on the space~\eqref{Eq:ETopo}. 

\begin{remark}{\rm
    Since we will put strong emphasis on synthetic molecules produced by a finite subgroup of $O(3)$, we supply further details about this group. If a point of the 3-dimensional space is specified by a column vector with three entries encoding its coordinates, then the rotation transformations take the form of a matrix. Any proper rotation takes the form of a $3 \times 3$ matrix of the form
\begin{equation}\label{Eq:Transf}
R(\hat n, \theta) = \exp( \theta \, \hat n \cdot \vec L)
\end{equation}
where $\theta$ is an angle, $\hat n =(n_x,n_y,n_z)$ is 3-component vector of norm one and $\vec L$ is the 3-component vector with entries 
\begin{equation}
L_x = {\small \begin{pmatrix}
0 & 0 & 0 \\ 0 & 0 & -1 \\ 0 & 1 & 0
\end{pmatrix}}, \ 
L_y = {\small \begin{pmatrix}
0 & 0 & 1 \\ 0 & 0 & 0 \\ -1 & 0 & 0
\end{pmatrix}}, \ 
L_z = {\small \begin{pmatrix}
0 & -1 & 0 \\ 1 & 0 & 0 \\ 0 & 0 & 0
\end{pmatrix}}.
\end{equation}
Up to a factor, $\vec L$ coincides with the angular momentum in quantum mechanics for a particle with spin 1. The somewhat complicated space $\RM \mathbb P^3$ is a result of the allowed ranges for the pair $(\theta,\hat n)$ or, equivalently, for the vector $\vec n_\theta = \theta \hat n$. For $\|\vec n_\theta\| < \pi$, \eqref{Eq:Transf} supplies distinct rotations, but for $\|\vec n_\theta\| = \pi$,  \eqref{Eq:Transf} supplies the same rotation if plug in $\vec n_\theta$ and $-\vec n_\theta$. Thus, rotations are fully parameterized by the points of a 3-dimensional ball of radius $\pi$ if the opposite points at its surface are identified. The resulting space is diffeomorphic to $\RM \mathbb P^3$, the space of lines in $\RM^4$ passing through the origin.
}$\Diamond$
\end{remark}

\begin{figure}[t]
    \centering
    \includegraphics[width=0.8\linewidth]{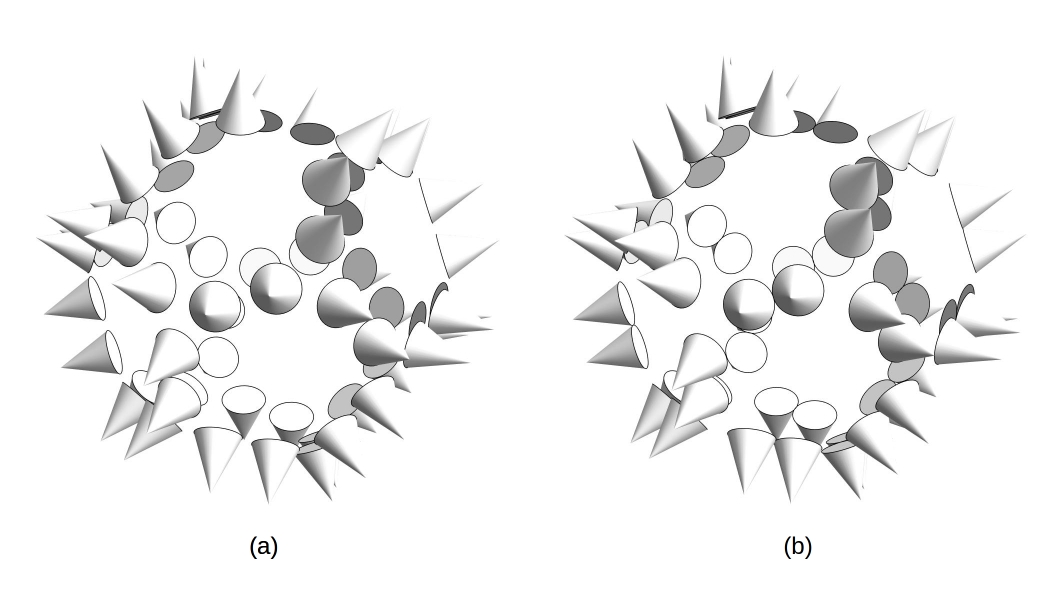}
    \caption{Architectures generated by acting with the proper icosahedral group on a seeding shape. The difference between panels (a) and (b) is the location of the seeding shape.}
    \label{Fig:IHMolecule}
\end{figure}

So far, the discussion was about formalizing the architecture of given material. However, we can reverse the arrows and use Proposition~\ref{Prop:Architecture1} to generate interesting architectures. In Fig.~\ref{Fig:IHMolecule}, we show two architectures generated by acting with the space transformations contained in the proper icosahedral point group on a seed resonator placed at two different locations in space. These and other similar examples will be discussed in detail in subsection~\ref{Sec:PointGroup}.

\subsection{Dynamics} We denote by $G$ the underlying group which labels the resonators of a synthetic molecule. Thus, $G$ can be the full Euclidean group or just its $O(3)$ subgroup. The seed resonator is labeled by the identity $e\in G$ and its degrees of freedom are $q(e,\alpha)$, $\alpha =1,\ldots,D$, with $D$ specifying the number of internal degrees of freedom. Then the degrees of freedom of a resonator labeled by $x \in \Ll \subset G$ in the synthetic molecule are $q(x,\alpha)$, $\alpha=1,\ldots,D$. When the system is driven harmonically with pulsation $\omega$, under the assumption of very small but finite dissipation, the degrees of freedom display an oscillatory behavior as functions of time $t$, 
\begin{equation}
    q(x,\alpha;t) = Q_R(x,\alpha) \cos(\omega t + \phi_{x,\alpha}) = {\rm Re}[Q(x,\alpha) e^{\imath \omega t}],
\end{equation}
where $Q_R \in \RM$ are real-valued amplitudes, and $Q \in \CM$ are complex-valued amplitudes which efficiently incorporate the phases $\phi$. We can place the complex amplitudes in one single vector 
\begin{equation}\label{Eq:QVec}
|Q\rangle = \sum_{x \in \Ll} \sum_{\alpha=1}^D Q(x,\alpha)|x,\alpha\rangle,
\end{equation}
where $|x,\alpha\rangle$ can be thought of as an abstract basis for the state space, or as a concrete column vector with entry 1 at one place and 0 entry at all other places.\footnote{The latter requires an (un-natural) ordering of $(x,\alpha)$'s and this is why the former viewpoint is preferred.} The vectors~\eqref{Eq:QVec} span the Hilbert space $\ell^2(\Ll,\CM^d) \simeq \CM^D \otimes \ell^2(\Ll)$. As a vector in $\CM^D \otimes \ell^2(\Ll)$, the dynamical state of the system takes the form $|Q\rangle = \sum_{x \in \Ll} \bm Q_x \otimes |x \rangle$, with $\bm Q_x \in \CM^d$.

External generalized forces $\{f_{g,\alpha}e^{\imath \omega t}\}$ driving the synthetic molecule can be also encoded as a vector from this Hilbert space, 
\begin{equation}
    |F\rangle = \sum_{x \in \Ll} f_{x,\alpha} |x,\alpha\rangle = \sum_{x\in \Ll} \bm f_x \otimes |x\rangle.
\end{equation} If the quadratic form of the Lagrange function is
\begin{equation}
L(Q_R,\dot Q_R) = \tfrac{1}{2} \dot Q_R^T \bm T \dot Q_R + \tfrac{1}{2} Q_R^T \bm V Q_R,
\end{equation}
where $\bm T$ and $\bm V$ are positive operators on $\CM^D \otimes \ell^2(\Ll)$, then the response of the synthetic molecule to the driving forces is given by the solution to the equation
\begin{equation}
-\omega^2 \bm T |Q\rangle +\bm V |Q\rangle =|F\rangle \ \ \Leftrightarrow \ \ \bm T^{\frac{1}{2}}(\bm D - \omega^2)\bm T^{\frac{1}{2}} |Q \rangle = |F \rangle,
\end{equation}
where 
\begin{equation}
    \bm D : = \bm T^{-\frac{1}{2}} \bm V \bm T^{-\frac{1}{2}}
    \end{equation}
is call the dynamical matrix. The resonant frequencies and the corresponding self-oscillating modes can be obtain from the spectral properties of $\bm D$, which, as any linear operator on the Hilbert space $\CM^D \otimes \ell^2(\Ll)$, it takes the form
\begin{equation}\label{Eq:DMatrix}
 \bm D = \sum_{x,x'\in \Ll} w_{x',x}(\Ll) \otimes |x' \rangle \langle x|,
 \end{equation}
where $w_{x',x}$ are $D \times D$ matrices with complex entries. Often, these are referred to as coupling matrices. Let us point out the similarity between the expressions \eqref{Eq:DMatrix} and \eqref{Eq:H}, where the only difference is that the lattices live in different topological groups, Euclidean group in the former and $\RM^3$ in the latter.

The group $G$ can act on itself from the left or from the right and we will be interested in both. The left action is supplied by multiplication to the left, while the right action goes as $g\cdot x = x g^{-1}$, for all $g,x \in G$. These actions are also well defined on subsets of $G$, in particular, on $\Ll$. Now, in Eq.~\eqref{Eq:DMatrix}, we were careful to specify that the coupling matrices of the synthetic molecule are entirely determined by the architecture of the molecule, hence by the lattice $\Ll$, as it was already explained in the previous subsection. From the principles stated there, we can also infer that the coupling matrices satisfy the following covariant relation,
 \begin{equation}\label{Eq:CovRel}
 w_{g \cdot x',g \cdot x}(g \cdot \Ll) = w_{x',x}(\Ll),
 \end{equation}
 which is a direct consequence of the Galilean invariance of the physical laws underlining the coupling of the resonators. Taking $g = x$, one finds
 \begin{equation}
     w_{x',x}(\Ll) = w_{x \cdot x',e}(x \cdot \Ll),
 \end{equation}
or equivalently,
\begin{equation}\label{Eq:RedDMat}
    \bm D = \sum_{x,x'\in \Ll} w_{x \cdot x'}(x \cdot \Ll) \otimes |x'\rangle \langle x|,
\end{equation}
where we removed the redundant lower index. As shown in \cite{MeslandJGP2024}, this particular form of the dynamical matrices, together with the assumption that the coupling matrices depend continuously on the architecture, enable one to construct the $C^\ast$-algebra where the models live. More precisely, if one keeps the architecture fixed but changes the internal structure of the seed resonator, the dynamical matrices densely sample a groupoid $C^\ast$-algebra that is entirely determined by the lattice $\Ll$. We discuss next how this $C^\ast$-algebra emerges in the particular setting.

\subsection{Synthetic molecules generated with point groups}
\label{Sec:PointGroup}

The finite subgroups of $SO(3)$ are all known and classified \cite{SenechalAMM1990}. In this subsection, we consider a seed resonator placed at a point in space and with an orientation described by $x_0 \in \bm E$. Copies of this seed resonator are then acted from the left with space transformations contained in a finite subgroup $\Gamma \subset SO(3)$. As such, we generate a finite lattice 
\begin{equation}\label{Eq:LGamma}
    \Ll =\{\gamma x_0, \ \gamma \in \Gamma\}.
    \end{equation}
The seed resonator corresponds to the neutral element $e \in \Gamma$ and is part of lattice $\Ll$.

\begin{remark}\label{Re:FixePoints}{\rm One important assumption of ours is that the cardinals of $\Ll$ and $\Gamma$ always coincide: $|\Ll|=|\Gamma|$. This forbids $x_0$ for being part of the set $\bm E^\Gamma$ of points of $\bm E$ fixed by $\Gamma$.}$\Diamond$
\end{remark}

We want to stress that resulting lattice depends quite sensitively on $x_0$, a fact that was already visible in Fig.~\ref{Fig:IHMolecule}. However, regardless of the initial choice of $x_0 \in \bm E \setminus \bm E^\Gamma$, we have:

\begin{proposition}
    The sets $x \cdot \Ll$ appearing in Eq.~\eqref{Eq:RedDMat} are all identical.
\end{proposition}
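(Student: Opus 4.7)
The plan is to exploit the rigid structure of $\Ll = \Gamma x_0$ as a single orbit (equivalently, a right coset of $x_0$ under left multiplication by $\Gamma$) inside $\bm E$, and to show that $x \cdot \Ll$ collapses onto the finite subgroup $\Gamma \subset SO(3) \subset \bm E$ for every choice of $x \in \Ll$, independently of $x$.

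First I would unfold the right-action notation. By the convention $g \cdot y = y g^{-1}$, for any $x \in \bm E$ one has
\[
x \cdot \Ll = \{ y x^{-1} : y \in \Ll \} = \Ll \, x^{-1}.
\]
Next, since $x \in \Ll$, I write $x = \gamma_x x_0$ with $\gamma_x \in \Gamma$ uniquely determined (by Remark~\ref{Re:FixePoints}, the map $\gamma \mapsto \gamma x_0$ is a bijection $\Gamma \to \Ll$), and I parametrise a generic element of $\Ll$ as $y = \gamma x_0$ with $\gamma \in \Gamma$. A direct computation gives
\[
y \, x^{-1} = (\gamma x_0)(\gamma_x x_0)^{-1} = \gamma \, x_0 x_0^{-1} \, \gamma_x^{-1} = \gamma \gamma_x^{-1},
\]
and the explicit dependence on $x_0$ drops out entirely. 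Since right multiplication by $\gamma_x^{-1}$ permutes $\Gamma$, I conclude
\[
x \cdot \Ll = \Gamma \gamma_x^{-1} = \Gamma \subset SO(3) \subset \bm E,
\]
and the right-hand side is manifestly the same for every $x \in \Ll$.

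I do not expect a real technical obstacle here: the argument is essentially a one-line group computation once the right action is spelled out, and the only mildly subtle point is remembering that the bijection $\Gamma \leftrightarrow \Ll$ underlying the parametrisation is precisely what is guaranteed by Remark~\ref{Re:FixePoints}. What makes the statement worth recording is its conceptual content: although $\Ll$ itself generically sits in a right coset of $x_0$ far from the rotation subgroup of $\bm E$ (typically involving genuine translations), the right translate $\Ll x^{-1}$ by any of its own elements lands back precisely on $\Gamma$. This collapse is the mechanism by which the coupling data in~\eqref{Eq:RedDMat} can be indexed by the single finite group $\Gamma$, and it foreshadows the identification of the algebra of admissible dynamical matrices with the group algebra of $\Gamma$ in subsequent sections.
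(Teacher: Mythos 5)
Your proof is correct and follows essentially the same route as the paper's: parametrize $\Ll$ as $\Gamma x_0$, compute $y\,x^{-1}=(\gamma x_0)(\gamma_x x_0)^{-1}=\gamma\gamma_x^{-1}$ so that $x_0$ cancels, and invoke the invariance of $\Gamma$ under right translation by its own elements. The additional remarks on the bijection $\Gamma\leftrightarrow\Ll$ and on the conceptual meaning of the collapse onto $\Gamma$ are accurate but not a different argument.
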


\begin{proof} Before giving the arguments, let us point out that $x \cdot \Ll$ for $x \in \Ll$ represents the lattice $\Ll$ as seen from the local reference frame of the resonator located at $x$. Then the statement says that synthetic molecule appears identical when looked at from any of the local reference frames. Now, if $x$ and $y$ belong to $\Ll$, they must be of the form $x=\gamma x_0$ and $y= \gamma' x_0$, for some $\gamma,\gamma' \in \Gamma$. As such, 
\begin{equation}
    x \cdot y = y x^{-1}=  (\gamma' x_0) (\gamma x_0)^{-1}= \gamma \cdot \gamma'.
\end{equation}
Given Eq.~\eqref{Eq:LGamma}, we have
\begin{equation}
    x \cdot \Ll = \gamma \cdot \Gamma = \Gamma,
\end{equation}
where we used the invariance of a group against its own right action.
\end{proof}

The above facts have important consequences: Since the coupling matrices are entirely determined by the local environment as experienced from the local frames of the resonators, we have $w_{x',x}(\Ll)=w_{x \cdot x'}$ and, by setting $x \cdot x' = \gamma$, we have
\begin{equation}\label{Eq:DGamma}
    \bm D = \sum_{\gamma \in \Gamma} w_\gamma \otimes \sum_{x \in \Ll}|\gamma x\rangle \langle x|.
\end{equation}
We will see in the next section that any such $\bm D$ can be generated from the left-regular representations of the (stabilized) group $C^\ast$-algebra of $\Gamma$. As explained in our introductory remarks, this is an effective though somewhat abstract way to communicate the constraints we want to impose on our synthetic molecules. Under these constraints, one is free to modify the position $x_0$ of the seed resonator in $\bm E \setminus \bm E^\Gamma$. One is also free to change the internal structure of the seed resonator, including adding or removing degrees of freedom. For example, we can add a second distinct seed resonator, because the pair can be seen as one seed resonator with a more complex structure. All these mentioned actions  will essentially result in an alteration of the coupling matrices $w_\gamma$ and they lead to a well defined class of synthetic molecules associated with the group $\Gamma$. The common thing about these molecules is that their dynamical matrices are all generated from the same group $C^\ast$-algebra.

Since there is no cap on the number of degrees of freedom carried by the seed resonator, one can create quite complicated molecules with the algorithm just described, with layers upon layers of resonators, each having their own spatial distributions. Due to this untamed complexity, one can be mislead into thinking that there are no limits on the eigenmodes or dynamical patterns one can produce with such molecules. Quite the contrary, once the point group $\Gamma$ is fixed, there are only a finite number of truly distinct dynamical patterns that one can squeeze out of the synthetic molecule (see section~\ref{Sec:FundModels}).

\begin{figure}[t]
\begin{center}
\hspace{10cm}
\includegraphics[width=0.6\textwidth]{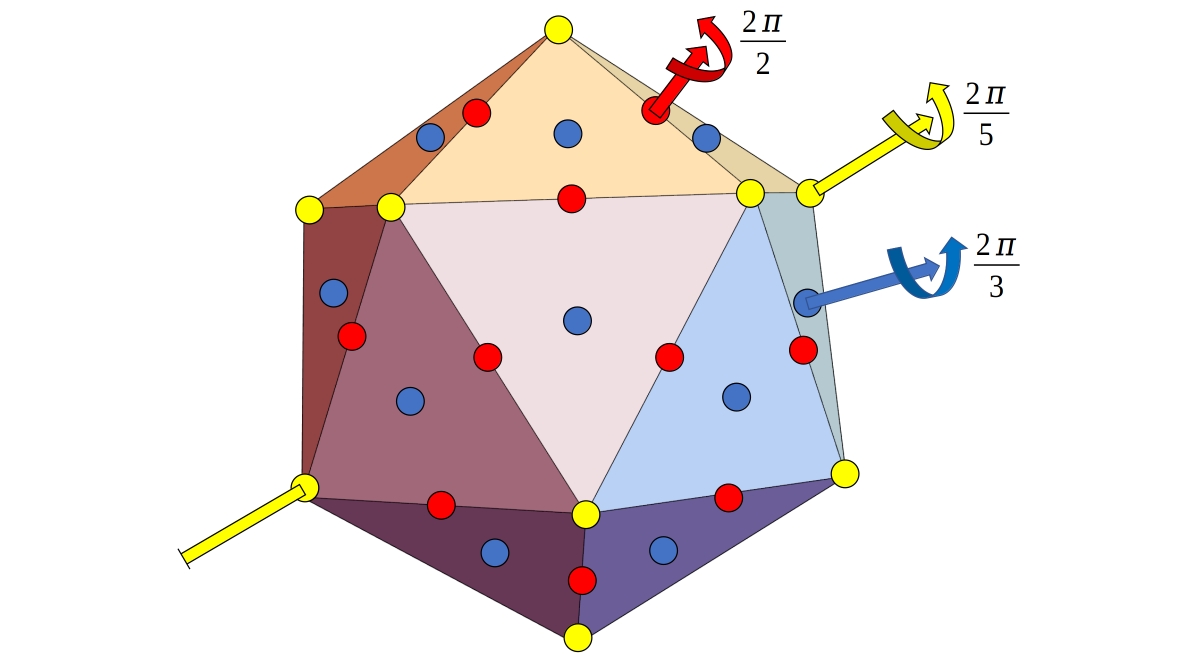}
  \caption{\small The icosahedron and its proper symmetries.
}
 \label{Fig:IcoSymm}
\end{center}
\end{figure}

\begin{example}\label{Re:Ico}{\rm The icosahedral group $I_h$ is the group of symmetries of  the icosahedron, which is one of the existing platonic shapes. The subgroup $I_p$ of $I_h$ containing all proper space transformations has exactly 60 elements. They can be described as follows. There are 5-fold rotations around the six axes passing through pairs of opposing vertices, shown as yellow bullets in Fig.~\ref{Fig:IcoSymm}. One of the mentioned rotation axes is shown in yellow in Fig.~\ref{Fig:IcoSymm}. Since these axes also pass through the center of the icosahedron, we can specify them by using the coordinates $\vec r =(x,y,z)$ of six vertices in a rectangular coordinate system with origin at the center of the icosahedron. Then the axes themselves are encoded in the norm one vectors $\hat n = \vec r /|\vec r|$. The mentioned coordinates are ($\varphi=$ golden ratio, $\epsilon = \pm 1$)
$$
C_5: \quad  (\epsilon , \varphi, 0), (0, \epsilon, \varphi), (\varphi, 0, \epsilon).
$$
Rotations by $\frac{2n\pi}{5}$, $n=\overline{1,4}$, around these six axes supply a total of 24 space transformations. Additionally, there are 3-fold rotations about the ten axes passing through the centers of opposing facets shown as blue bullets in Fig.~\ref{Fig:IcoSymm}, and one of the axes is shown in blue in Fig.~\ref{Fig:IcoSymm}. Again, we specify these axes by giving the coordinates of the centers of ten facets, which are
$$
C_3: \quad \begin{array}{c}
(-\varphi,0,\epsilon(1+2\varphi)), (\epsilon(1+\varphi),1+\varphi,1+\varphi), (0,1+2\varphi,\epsilon \varphi), \\
(-1-2\varphi,\epsilon \varphi,0), (-1-\varphi,\epsilon(1+\varphi),-\epsilon(1+\varphi)).
\end{array}
$$
Rotations by $\frac{2n\pi}{3}$, $n=1,2$, around the ten axes supply a total of 20 space transformations. Lastly, there are 2-fold rotations about the 15 axes passing through the centers of opposing edges, shown as red bullets in Fig.~\ref{Fig:IcoSymm}. One such rotation axis is shown in red in Fig.~\ref{Fig:IcoSymm}. We specify these axes by giving the coordinates of the centers of 15 edges, which are 
$$
C_2: \quad \begin{array}{c}
(\varphi,0,0), (0,\varphi,0),(0,0,\varphi) ,\\
\tfrac{1}{2}(-\epsilon ,\epsilon (1+\varphi), \varphi),\tfrac{1}{2} (\epsilon (\varphi+1), -\epsilon \varphi, 1), \tfrac{1}{2}(\epsilon \varphi, -\epsilon, 1+\varphi),\\
\tfrac{1}{2}(\epsilon ,\epsilon (1+\varphi), \varphi),\tfrac{1}{2} (\epsilon (\varphi+1), \epsilon \varphi, 1), \tfrac{1}{2}(\epsilon \varphi, \epsilon, 1+\varphi).
\end{array}
$$
Rotations by $\pi$ around these 15 axes supply a total of 15 space transformations. Together with the identity transformation, we have listed 60 space transformations, hence, all the elements of the proper icosahedral group.
    }$\Diamond$
\end{example}

\begin{remark}{\rm The procedures used in Example~\ref{Re:Ico} apply to all platonic shapes, in the sense that all their proper point symmetries can be found by identifying the pairs of opposing vertices, opposing edge centers and opposing face centers, together with the appropriate angles of rotations. The latter are easy to determine once the axes of rotations are identified by the pairs, as explained.}$\Diamond$   
\end{remark}

The ensembles seen in Fig.~\ref{Fig:IHMolecule} were produced by applying the transformations listed above on a seeding shape, a cone in this instance. If this seeding shape is fitted with magnets as in Fig.~\ref{Fig:SymCp}, then this procedure produces genuine synthetic molecules from the class associated with the $I_p$ group. On the contrary, the synthetic molecule shown in Fig.~\ref{Fig:RealSynthMolecule} has 12 resonators placed exactly above the vertices of an icosahedron. Thus, this is one of the cases where the seed resonator sits at a forbidden location. Another observation is that the molecule in Fig.~\ref{Fig:IHMolecule}(b) is not as symmetric as the platonic shape, since some pairs of resonators are closer spaced than others. Nevertheless, its lattice is still symmetric under the full $I_p$ group.

\begin{figure}[t]
\begin{center}
\hspace{10cm}
\includegraphics[width=0.5\textwidth]{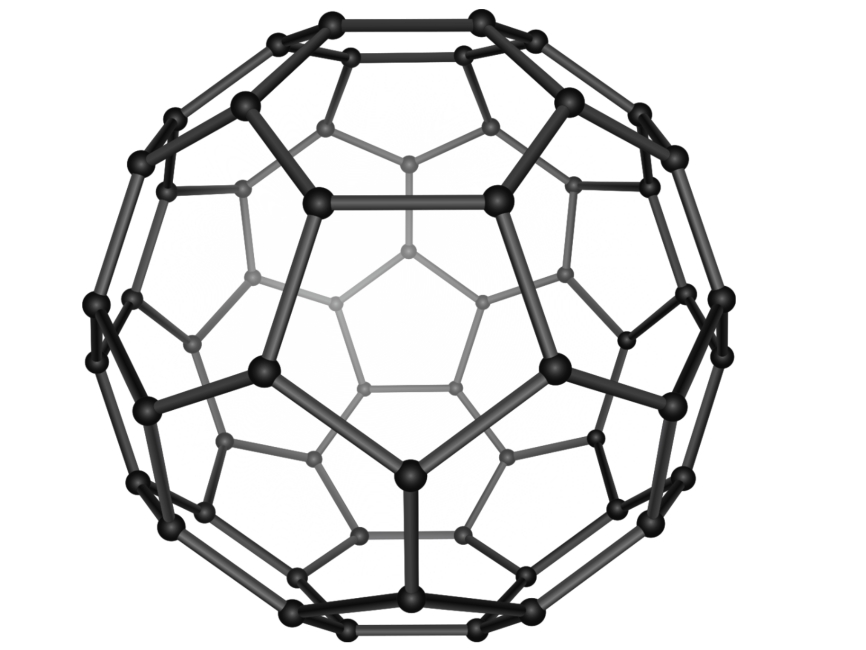}
  \caption{\small The C$_{60}$ molecule contains 60 carbon atoms arranged in a spatial configuration that matches the vertices of a truncated icosahedron. 
}
 \label{Fig:C60}
\end{center}
\end{figure}

Fig.~\ref{Fig:C60} illustrates the C$_{60}$ molecule \cite{KrotoNature1985}, where the 60 carbon atoms sit at the vertices of a truncated icosahedron. The nodes seen in this picture represent carbon atoms. Their outer-shell electrons are $sp^2$-hybridized as in graphene, hence 3 of the valence electrons of a carbon atom are locked in strong covalent chemical bonds, seen as edges in Fig.~\ref{Fig:C60},  while the remaining valence electrons (one per carbon atom) hop between the $\pi$-orbitals carried by each of the atoms. If $|a\rangle$ is the $\pi$-orbital carried by atom $a$ of the molecule, as defined in a local frame attached to the atom, then we can write the discrete Hamiltonian
\begin{equation}\label{Eq:HC60}
H_{C_{60}} = - E_0 \sum_{\langle a,a'\rangle} |a\rangle \langle a'|, \quad E_0 >0,
\end{equation}
where $\langle \, , \, \rangle$ means near neighboring atoms. This Hamiltonian accounts well for the low energy dynamics of the 60 valence electrons of the molecule.

\subsection{The Cayley graph picture} We mentioned at one point the similarity between the expressions~\eqref{Eq:DGamma} and \eqref{Eq:H}, which is a nice feature of the formalism. Still, in Eq.~\eqref{Eq:DGamma}, the lattice needs to be imagined inside an abstract group, but this can be corrected using the notion of Cayley graph.

The isometry class of all space groups can be abstractly presented in terms of generators and relations \cite{CoxeterBook}. In particular, if $\FM_2=\langle X,Y\rangle$ is the free nonabelian group in two generators and $N$ is its normal subgroup generated by the set $\{X^2,Y^2, (XY)^3\}$, then $\FM_2/N$ is isomorphic to $I_p$. Explicitly, the isomorphism is given by the identifications $X \leftrightarrow C_5$ and $Y \leftrightarrow C_2$, where $C_5$ and $C_2$ are the rotations displayed in Fig.~\ref{Fig:IcoSymm}. Thus, $I_p$ can be presented as
\begin{equation}\label{Eq:GenRel1}
I_p = \big \langle C_5, C_2 | C_5^5 , C_2^2 , (C_5 C_2)^3  \big \rangle.
\end{equation}
As such, any element of $I_p$ can be written as a word made up from the two letters $C_5$ and $C_2$, but this writing is not unique due to the presence of relations. One should also be aware that other generating sets and relations are possible, and, to distinguish the presentation of $I_p$ displayed above from others, we will refer to Eq.~\eqref{Eq:GenRel1} as the standard presentation of $I_p$. Such facts apply to any finite subgroup of $SO(3)$.

\begin{definition}[\cite{MeierBook}]\label{Def:CG} Given a discrete group $\Gamma$ and a finite subset $S$ of $\Gamma$, the Cayley graph $\Cc(\Gamma,S)$ is the un-directed graph with vertex set $\Gamma$ and edge set containing an
edge between $\gamma$ and $s \gamma$ whenever $\gamma \in \Gamma$ and $s \in S$.  
\end{definition}

If $S$ is a standard generating set of $\Gamma$, then we call $\Cc(\Gamma,S)$ the standard Cayley graph of $\Gamma$. A more refined and more useful geometric object is the Cayley digraph:

\begin{definition}[\cite{MeierBook}]\label{Def:CDG} Given a discrete group $\Gamma$ and a subset $S$ of $\Gamma$, let $c: S \to {\rm Color}$ assign a distinct color to each $s \in S$. Then the Cayley digraph $\vec \Cc(\Gamma,S,c)$ is the colored graph with vertex set $\Gamma$ and directed edges from $\gamma$ to $s\gamma$ for $\gamma \in \Gamma$ and $s \in S$. All directed edges produced by $s \in S$ are assigned the color $c(s)$. 
\end{definition}

\begin{figure}[t]
\center
\includegraphics[width=0.4\textwidth]{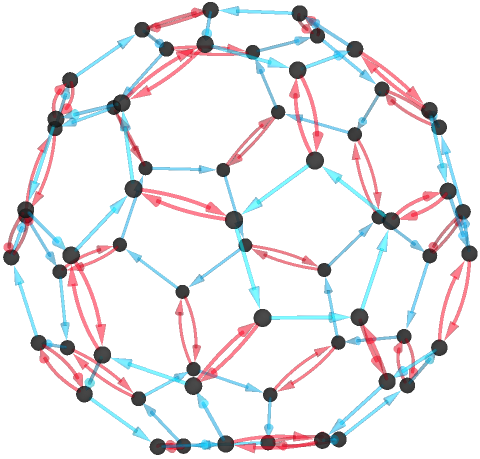}
  \caption{\small Cayley digraph of the proper icosahedral group. Red arrows represent the $C_2$ generator and blue arrows represent the $C_5$ generator.
}
 \label{Fig:IcoDiag}
\end{figure}

The standard Cayley digraph of $I_p$ is illustrated in Fig.~\ref{Fig:IcoDiag}, which renders the group in a geometric fashion. We can select any vertex of this graph to represent the neutral element $e$, and various paths between $e$ and a vertex $\gamma \in \Gamma$ supply a valid word for $\gamma$ made up from the letters $C_5$ and $C_2$ and their inverses. The Hamiltonian~\eqref{Eq:HC60} coincides, up to a factor, with the adjacency operator of this graph
\begin{equation}\label{Eq:Adjacency}
\Delta = - \sum_{\gamma \in I_p} \big (\tfrac{1}{2}|C_5 \gamma \rangle \langle \gamma| +\tfrac{1}{2} |C_5^{-1} \gamma\rangle \langle \gamma| + |C_2 \gamma\rangle \langle \gamma| \big ).
\end{equation}
We will see in subsection~\ref{Sec:SpecEng} how such Hamiltonians can be graphically represented on the Cayley graph.

As advertised in \cite{LuxAHP2024}, Cayley graphs can be used to produce real synthetic molecules from abstract finitely generated groups. Furthermore, Cayley graphs can be used very effectively to render the resonant modes of molecules.

\section{The $C^\ast$-algebra of dynamical matrices}\label{Sec:GCAlg}

This section establishes the connection between the abstract $C^\ast$-algebra of a point group and the algebra of dynamical matrices for the class of synthetic molecules introduced in the previous section.

\subsection{Group $C^\ast$-algebras}\label{Sec:CG} The material of this subsection is standard and it can be found, for example, in \cite{DavidsonBook}[Ch.~VIII]. Given a generic discrete group $\Gamma$, its group algebra $\CM \Gamma$ consists of formal series
\begin{equation}\label{Eq:CG}
q = \sum_{\gamma \in \Gamma} \alpha_\gamma \, \gamma, \quad \alpha_\gamma \in \CM,
\end{equation}
where all but a finite number of terms are zero.\footnote{Of course, this is of no concern if $\Gamma$ is finite.}
The addition and multiplication of such formal series are defined in the obvious way, using the group and algebraic structures of $\Gamma$ and $\CM$, respectively. 
In addition, there exists a natural $\ast$-operation
\begin{equation}
q^\ast = \sum_{\gamma \in \Gamma} \alpha_\gamma^\ast \, \gamma^{-1}, \quad (q^\ast)^\ast = q, \quad (\alpha q)^\ast = \alpha^\ast q^\ast, \ \alpha \in \CM.
\end{equation}
Hence, $\CM \Gamma$ is a $\ast$-algebra in a natural way.

\vspace{0.1cm}

We denote by $e$ the neutral element of $\Gamma$. Then the map
\begin{equation}
\Tt : \CM \Gamma \to \CM, \quad \Tt(q) = \alpha_e
\end{equation}
defines a faithful positive trace on $\CM \Gamma$ and a pre-Hilbert structure on $\CM \Gamma$ via
\begin{equation}
\langle q, q' \rangle : = \Tt(q^\ast q'), \quad q,q' \in \CM \Gamma.
\end{equation}
The completion of the linear space $\CM G$ under this pre-Hilbert structure supplies the Hilbert space $\ell^2(\Gamma)$, spanned by $|\gamma\rangle$, $\gamma \in \Gamma$, which form an orthonormal basis
\begin{equation}
\langle \gamma , \gamma' \rangle = \delta_{\gamma,\gamma'}, \quad \gamma,\gamma' \in \Gamma.
\end{equation} 
The left action of $\CM \Gamma$ on itself can be extended to the action of a bounded operator on $\ell^2(\Gamma)$, and this supplies the left regular representation $\pi_L$ of $\CM \Gamma$ inside the algebra $\BM\big(\ell^2(\Gamma)\big)$ of bounded operators on $\ell^2(\Gamma)$. Specifically,
\begin{equation}\label{Eq:PiLeftReg}
\pi_L(q) |\gamma'\rangle = \sum_{\gamma\in \Gamma} \alpha_{\gamma} |\gamma \gamma' \rangle, \quad q \in \CM \Gamma, \quad \gamma' \in \Gamma,
\end{equation} 
where $q = \sum_{\gamma \in \Gamma} \alpha_{\gamma}  \gamma$ with $\alpha_{\gamma} \in \CM$.
The completion of $\CM \Gamma$ with respect to the norm
\begin{equation}\label{Eq:CNorm}
\|q\| : = \|\pi_L(q)\|_{\BM(\ell^2(\Gamma))}
\end{equation}
supplies the reduced group $C^\ast$-algebra $C^\ast_r \Gamma$ of $\Gamma$.

\begin{remark}{\rm It may seems that, for finite groups, we can equally well work with the group algebra. However, the norm~\eqref{Eq:CNorm} puts a topology on the algebra, which will then enable us to speak of ``continuous deformations'' of the models. 
}$\Diamond$
\end{remark}

The right-regular representation of the group algebra acts on $\ell^2(\Gamma)$ as
\begin{equation}\label{Eq:PiRightReg}
\pi_R(q) |\gamma'\rangle : = \sum_{\gamma \in \Gamma} \alpha_{\gamma} |\gamma' \gamma^{-1} \rangle, \quad q \in C^\ast_r \Gamma, \quad \gamma' \in \Gamma.
\end{equation} 
\begin{remark}\label{Re:Tau}{\rm We note that, if the group $\Gamma$ is finite, then
\begin{equation}
\Tt(q) = \frac{1}{|\Gamma|} {\rm Tr}\big ( \pi_L(q) \big ) = \frac{1}{|\Gamma|} {\rm Tr}\big ( \pi_R(q) \big ),
\end{equation}
which can be seen straight from \eqref{Eq:PiLeftReg} and \eqref{Eq:PiRightReg}.}$\Diamond$
\end{remark}

\subsection{Symmetry and structure of dynamical matrices} The right-regular representation induces a unitary representation of $\Gamma$ on $\ell^2(\Gamma)$:
\begin{equation}
U : \Gamma \to \BM\big ( \ell^2(\Gamma) \big ), \quad U_\gamma |\gamma' \rangle : = \pi_R(\gamma)|\gamma'\rangle = |\gamma' \gamma^{-1}\rangle.
\end{equation}
Since the left and right actions commute, we automatically have that
\begin{equation}
U_\gamma^\ast \, \pi_L(q) \, U_\gamma = \pi_L(q), \quad q \in C^\ast_r \Gamma, \quad \gamma \in \Gamma.
\end{equation}
Hence, the group $C^\ast$-algebra supplies models that are naturally $\Gamma$-symmetric. 

Now, given the action~\eqref{Eq:PiLeftReg} of left-regular representation on the basis of $\ell^2(\Gamma)$, we can write
\begin{equation}\label{Eq:PiLQ}
    \pi_L(q) = \sum_{\gamma \in \Gamma} \alpha_\gamma \sum_{\gamma' \in \Gamma}|\gamma \gamma'\rangle \langle \gamma'|.
\end{equation}
Using the isomorphism of Hilbert spaces $\ell^2(\Ll) \simeq \ell^2(\Gamma)$, $|x\rangle = |\gamma x_0\rangle \mapsto |\gamma\rangle$, we can compare \eqref{Eq:PiLQ} with \eqref{Eq:DGamma} to conclude the $\bm D$ is just the left-regular representation of the element $h=\sum_\gamma w_\gamma \gamma \in C^\ast_r \Gamma$. The only difference is that $w_\gamma$ is a matrix in general and we can correct for it by tensoring $C^\ast_r \Gamma$ with $\KM$, the algebra of compact operators over $\ell^2(\Dd)$, where $\Dd$ is the set of internal degrees of freedom.\footnote{To allow unrestricted turning off and on of the degrees of freedom, $\Dd$ is assumed to be infinite.} This operation is commonly referred to as the stabilization of the algebra. At this point, we reached one of our main points:

\begin{proposition} All dynamical matrices for the class of synthetic molecules associated with the group $\Gamma$ can be generated as left-regular representations of self-adjoint elements from $\KM \otimes C^\ast_r \Gamma$.
\end{proposition}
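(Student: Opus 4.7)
The plan is to exhibit an explicit element $h \in \KM \otimes C^\ast_r \Gamma$ whose (stabilized) left-regular representation coincides with an arbitrary dynamical matrix $\bm D$ of the form derived in \eqref{Eq:DGamma}. The argument is essentially a matching of formulas: identify the two Hilbert spaces on which each operator acts, read off the coefficients, and translate self-adjointness of $\bm D$ into self-adjointness of $h$.

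First I would fix the identification $\ell^2(\Ll) \simeq \ell^2(\Gamma)$ via $|\gamma x_0\rangle \mapsto |\gamma\rangle$, which is a well-defined unitary precisely because the assumption $x_0 \in \bm E \setminus \bm E^\Gamma$ from Remark~\ref{Re:FixePoints} forces $|\Ll| = |\Gamma|$. Under this identification, the inner sum in \eqref{Eq:DGamma} becomes $\sum_{\gamma' \in \Gamma}|\gamma \gamma'\rangle\langle \gamma'|$, which is exactly the operator $\pi_L(\gamma)$ produced by the generator $\gamma \in \Gamma \subset C^\ast_r \Gamma$ via \eqref{Eq:PiLQ}. Thus, if the coupling matrices $w_\gamma$ were scalars $\alpha_\gamma \in \CM$, the identity $\bm D = \pi_L(h)$ with $h = \sum_\gamma \alpha_\gamma \gamma \in C^\ast_r \Gamma$ would be immediate.

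Next I would absorb the internal degrees of freedom by stabilization. Since each $w_\gamma$ is a $D \times D$ matrix with complex entries, I would regard $w_\gamma$ as a finite-rank (hence compact) operator on $\ell^2(\Dd)$ after fixing an isometric embedding $\CM^D \hookrightarrow \ell^2(\Dd)$, and then define
$$ h = \sum_{\gamma \in \Gamma} w_\gamma \otimes \gamma \in \KM \otimes C^\ast_r \Gamma. $$
Because $\Gamma$ is finite, the sum is finite and no convergence issue arises. The natural extension $\id \otimes \pi_L$ of the left-regular representation to the stabilized algebra lands in $\BM\bigl(\ell^2(\Dd) \otimes \ell^2(\Gamma)\bigr)$ and sends $h$ to $\sum_\gamma w_\gamma \otimes \sum_{\gamma'}|\gamma \gamma'\rangle\langle \gamma'|$, which coincides with $\bm D$ under the Hilbert-space identification above.

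Finally, self-adjointness must be matched. Because $\bm D = \bm T^{-1/2} \bm V \bm T^{-1/2}$ with $\bm T, \bm V$ positive, $\bm D$ is self-adjoint; on the algebra side, $h^\ast = \sum_\gamma w_\gamma^\ast \otimes \gamma^{-1} = \sum_\gamma w_{\gamma^{-1}}^\ast \otimes \gamma$, so $h = h^\ast$ is equivalent to the collection of relations $w_\gamma = w_{\gamma^{-1}}^\ast$. These are precisely the constraints the physical dynamical matrix must satisfy, so $h$ is self-adjoint whenever $\bm D$ is, and conversely. The only real obstacle is bookkeeping: one must ensure that the tensor-factor ordering in $\KM \otimes C^\ast_r \Gamma$ is compatible with the ordering in $\ell^2(\Dd) \otimes \ell^2(\Gamma)$, and that the embedding of the $D \times D$ matrix coefficients into $\KM$ does not introduce spurious matching elements; once these conventions are pinned down, the proposition is a direct comparison of \eqref{Eq:PiLQ} with \eqref{Eq:DGamma}.
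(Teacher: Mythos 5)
Your proof is correct and follows essentially the same route as the paper: identify $\ell^2(\Ll)\simeq\ell^2(\Gamma)$ via $|\gamma x_0\rangle\mapsto|\gamma\rangle$, compare \eqref{Eq:DGamma} with \eqref{Eq:PiLQ}, and absorb the matrix-valued coefficients $w_\gamma$ by stabilizing with $\KM$. The explicit check that self-adjointness of $\bm D$ translates into $w_\gamma = w_{\gamma^{-1}}^\ast$ is a welcome detail that the paper leaves implicit.
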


\begin{example}{\rm The adjacency operator~\eqref{Eq:Adjacency} is the left-regular representation of the self-adjoint element $\delta = \frac{1}{2}(C_5+C_5^{-1}) + C_2$ from $C^\ast_r I_p$.}$\Diamond$
\end{example}

\section{Algebraic and Topological Aspects of the Dynamics}
\label{Sec:AlgTopDyn}

The subject of this section is on the decomposition of the linear space of resonant modes in the smallest possible subspaces that remain invariant under dynamics, as well as on the classification of these subspaces from algebraic and topological point of views. The former one engages the ordinary representation theory while the latter one requires more specialized tools, specifically, Kasparov's bi-variant K-theory. While an in-depth mastery of the latter requires substantial effort, the concepts and the working principles can be communicated in very few words. For a reader unfamiliar with the theory, seeing them in action in the simplest but non-trivial context of finite groups and algebras can serve as a door to the world of these ideas. This is one of the main reasons KK-theory appears in our exposition. Another reason is that there is no other natural alternative to the topological point of view. In our opinion, KK-theory is a must-have tool for anyone planning to enter the business of topological dynamics. We hope that the patient reader can sense that from the exposition below.

\subsection{The algebraic representation ring}
\label{Sec:AlgRepR}

A unitary representation of a group $\Gamma$ is a group morphism between the group and the group $\UM(\Hh)$ of unitary operators over a Hilbert space $\Hh$. Thus, a unitary representation $\pi : \Gamma \to \UM(\Hh)$, typically shorthanded to just $\pi$, specifies both the Hilbert space $\Hh$ and the morphism itself. If the Hilbert space has a finite dimension $N$, then we are talking about finite dimensional representations, which are nothing but group morphisms from $\Gamma$ to $U(N)$, the group of unitary matrices of rank $N$. Two representations $\pi_1: \Gamma \to \UM(\Hh_1)$ and $\pi_2: G \to \UM(\Hh_2)$ are said to be unitarily equivalent if there exists a unitary map $U: \Hh_1 \to \Hh_2$ such that $\pi_1 = U^\ast \pi_2 U$. Representation theory is the art of classifying the representations of a group up to unitary equivalence. Since the relation between finite groups and their group algebras is functorial, any finite group representation of a finite group extends by linearity to a finite representation of its algebra as bounded linear operators over the same Hilbert space. Reciprocally, any finite representation of a group's algebra supplies, by restriction to the group elements, a representation of the group.

Given two finite group representations $\pi_1$ and $\pi_2$, we can define a new finite representation by using their direct sum $\pi_1 \oplus \pi_2$. This binary operation is compatible with the unitary equivalence, hence it descends on the set of unitarily equivalent classes of finite representations, where it becomes an abelian binary operation. Thus, the set $\Rr$ of classes of unitarily equivalent finite group representations has a natural structure of an abelian semigroup. It can be closed to a group $(\Rr,\oplus)$, which is called the group of virtual representations. 

\begin{table}[t!]
\begin{center} 
\begin{tabular}{c|c c c c c}
\hline
$ \mathbf{I_p} $ & $E$ & $12C_5$ & $12C_5^2$ & $20C_3$ & $15C_2$ \\ \hline

$A_g$ & 1 & 1 & 1 & 1 & 1 \\ 
$T_{1g}$ & 3 & $\varphi$ & $1-\varphi$ & 0 & -1 \\ 

$T_{2g}$ & 3 & $1-\varphi$ & $\varphi$ & 0 & -1 \\ 

$G_g$ & 4 & -1 & -1 & 1 & 0  \\ 

$H_g$ & 5 & 0 & 0 & -1 & 1  \\ 
\end{tabular}
\end{center}
\caption{The character table of $I_p$ group, where $\varphi$ is the golden ratio.}
\label{Tb:1}
\end{table}

A representation $\pi: \Gamma \to \UM(\Hh)$ is said to be irreducible if there is no linear subspace of $\Hh$ that is left invariant by the actions of all $\pi(\gamma)$, $\gamma\in \Gamma$. Up to unitary equivalences, the irreducible representations of a finite group $\Gamma$ can be enumerated using the characters of the group, which are readily available in the literature. For example, the table of characters of the proper icosahedral group can be found in \cite{LiboffBook} and is reproduced in Table~\ref{Tb:1}. There is a simple and practical way to generate an irreducible representation $\pi_\chi$ from a character $\chi$ (see section~\ref{Sec:GenFM}). Any finite group representation $\pi$ is unitarily equivalent to a finite direct sum of irreducible representations
\begin{equation}
\pi \simeq \bigoplus\nolimits_\chi (\pi_\chi \oplus \cdots \oplus \pi_\chi)
\end{equation} 
The above can be written more compactly as $\pi \simeq \oplus_\chi \, n_\chi \, \pi_\chi$, $n_\chi \in \NM$, which shows that $(\Rr,\oplus)$ is freely and commutatively generated over the classes of irreducible representations.

There is additional algebraic structure on $\Rr$. Indeed, given two finite group representations $\pi_1$ and $\pi_2$, we can generate a new finite representation using their tensor product $\pi_1 \otimes \pi_2$. This binary operation is also compatible with the unitary equivalence, hence it descends on $\Rr$, where it becomes an associative commutative binary operation, which distributes over $\oplus$. Thus, the set $\Rr$ of classes of unitarily equivalent finite group virtual representations has a natural structure of a commutative ring, $(\Rr,\oplus,\otimes)$, called the ring of virtual representations \cite{Serre1977LinearRO}.

\subsection{Dynamics vs group representations}
\label{Sec:DynVsRep}

When analyzing the dynamics of a synthetic molecule, the ring of representations comes into play in the following way. Let $\bm D$ be a dynamical matrix as in \eqref{Eq:DGamma}. As we have already seen, $\bm D$ is a linear operator over the Hilbert space $\Hh=\CM^D \otimes \ell^2(\Gamma)$, which is necessarily the left-regular representation of an element $h$ from the group algebra $\CM \Gamma$. Suppose we compute or measure the resonant spectrum of $\bm D$ and we find an eigenvalue $\lambda$. By standard methods, we can associate a spectral projection $P_\lambda$ such that $P_\lambda \Hh$ is the linear space spanned by all the eigenvectors of $\bm D$ corresponding to $\lambda$. One finds that this $P_\lambda$ is also the left-regular representation of a projection $p_\lambda$ from $\CM \Gamma$. For example, consider the so-called interpolating polynomials defined by the relations $F_\lambda(\lambda') = \delta_{\lambda,\lambda'}$ for any $\lambda$ and $\lambda'$ from the discrete spectrum of $\bm D = \pi_L(h)$. Recalling that the finite representations commute with polynomial calculus, we have  
\begin{equation}
P_{\lambda} = F_{\lambda}(\bm D) = F_{\lambda}(\pi_L(h))=\pi_L(F_{\lambda}(h)),
\end{equation}
which shows that $P_{\lambda}$ is the left-regular representation of $p_{\lambda}=F_{\lambda}(h) \in \CM \Gamma$. Lastly, since the left and right regular representations commute, we see that
\begin{equation}\label{Eq:PiLambda}
\pi_\lambda: \Gamma \to \BM(P_\lambda \Hh), \quad \pi_\lambda (\gamma) : = P_\lambda \pi_R(\gamma) P_\lambda,
\end{equation}
defines a finite representation of $\Gamma$. We learned in the previous sub-section that each $\pi_\lambda$ is unitarily equivalent to a direct sum of irreducible representations. Thus, we have the following inescapable fact:

\begin{proposition} Up to a unitary transformation, all resonant modes, without exception, coincide with a vector from the representation space of one of the $\pi_\chi$'s, or linear combinations of such vectors.
\end{proposition}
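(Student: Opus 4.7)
The plan is to chase the statement through the chain of identifications already assembled in the excerpt. Let $v$ be a resonant mode, i.e.\ an eigenvector of some dynamical matrix $\bm D$ with eigenvalue $\lambda$. By the previous section, $\bm D$ is the left-regular representation $\pi_L(h)$ of a self-adjoint element $h \in \KM \otimes C^\ast_r \Gamma$, and $v$ lies in the spectral subspace $P_\lambda \Hh$. The first step is to observe that $P_\lambda$ itself is of the form $\pi_L(p_\lambda)$ for a projection $p_\lambda$ in the algebra: this is exactly the interpolating-polynomial calculation already displayed just above the statement, which works because $\Gamma$ is finite and therefore the spectrum of $\bm D$ is finite.

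The second step is the invariance argument. Since the left and right regular representations of $\CM \Gamma$ commute, $P_\lambda = \pi_L(p_\lambda)$ commutes with every $\pi_R(\gamma)$, so $P_\lambda \Hh$ is stable under the right-regular action. The compression $\pi_\lambda(\gamma) := P_\lambda \pi_R(\gamma) P_\lambda$ of Eq.~\eqref{Eq:PiLambda} is therefore a genuine unitary representation of $\Gamma$ on the finite-dimensional Hilbert space $P_\lambda \Hh$, not just a projection of operators.

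The third step is the invocation of the classical structure theorem (Maschke plus Schur) recalled in subsection~\ref{Sec:AlgRepR}: every finite-dimensional unitary representation of a finite group decomposes, up to unitary equivalence, as a finite orthogonal direct sum of irreducibles. Applied to $\pi_\lambda$, this yields a unitary
\begin{equation}
U_\lambda : P_\lambda \Hh \; \xrightarrow{\ \simeq \ } \; \bigoplus_{\chi} \CM^{n_\chi^{(\lambda)}} \otimes V_\chi
\end{equation}
intertwining $\pi_\lambda$ with $\bigoplus_\chi n_\chi^{(\lambda)} \pi_\chi$, where $V_\chi$ is the representation space of $\pi_\chi$ and $n_\chi^{(\lambda)} \in \NM$ are the multiplicities. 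Under $U_\lambda$, the eigenvector $v$ becomes a vector in $\bigoplus_\chi \CM^{n_\chi^{(\lambda)}} \otimes V_\chi$, which is by construction a linear combination of vectors of the form $e_i \otimes w$ with $w \in V_\chi$. This is precisely the conclusion.

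The step I expect to be most delicate is not the representation-theoretic decomposition itself but rather verifying cleanly that $P_\lambda$ belongs to the image of $\pi_L$ in the stabilized setting $\KM \otimes C^\ast_r \Gamma$, since there the polynomial calculus argument must be combined with the fact that $\bm D$ acts on $\CM^D \otimes \ell^2(\Gamma)$ and the right action of $\Gamma$ is by $\mathrm{id} \otimes \pi_R$. Once one notes that spectral projections of any element of $\pi_L(\KM \otimes C^\ast_r \Gamma)$ automatically commute with $\mathrm{id} \otimes \pi_R$, the remainder of the argument goes through verbatim. Everything else is bookkeeping.
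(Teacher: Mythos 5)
Your proposal is correct and follows essentially the same route as the paper, whose argument is contained in the paragraph preceding the proposition: the interpolating-polynomial construction of $p_\lambda$, the commutation of left and right regular representations to make $\pi_\lambda(\gamma)=P_\lambda\pi_R(\gamma)P_\lambda$ a genuine representation, and the decomposition of any finite representation into irreducibles. Your added care about the stabilized setting (that the right action is $\mathrm{id}\otimes\pi_R$ and still commutes with spectral projections of elements in the image of $\pi_L$) is a point the paper passes over silently, but it does not change the argument.
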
 

The above fact shows that the dynamical patterns supported by a synthetic molecule from the class associated to the group $\Gamma$ can be all generated as linear combinations of vectors drawn from a finite number of spaces, which can be enumerated by the character of the group. The statement is purely algebraic and, as such, it cannot reveal anything about the stability of those dynamical patterns against continuous deformations of the molecule. In the light of this remark, it is then natural to approach the representation theory and the classification of dynamical patterns from a topological point of view. This is the subject of the following two subsections.

\subsection{The topological representation ring} 
\label{Sec:TopRepR}

We describe here the so called ring of Fredholm representations introduced by Kasparov \cite{KasparovConspectus}. This is achieved through Kasparov's bi-variant K-theory \cite{KasparovJSM1981,KasparovJSM1987}, also known as KK-theory, which has many equivalent presentations. In Cuntz's picture \cite{CuntzKTheory1987}, for two $C^\ast$-algebras $A$ and $B$, an element of Kasparov's abelian group $KK_0(A,B)$ is presented as the homotopy class of a pair of morphisms $\bar \phi =(\phi_1,\phi_2)$ from $A$ to the algebra of adjointable endomorphisms over the standard Hilbert $C^\ast$-module $\ell^2(\NM,B)$, with the property that the difference $\phi_1 -\phi_2$ takes values in $\KM \otimes B$. In particular, any morphism $\psi : A \to \KM \otimes B$ supplies the element $[(\psi,0)]$ of $KK_0(A,B)$, where $[\cdot]$ indicates the homotopy class. The abelian group structure on $KK_0(A,B)$, denoted by ``+'' in the following, is induced by the direct sum of algebra morphisms.

A powerful feature of Kasparov's K-theory is the existence of multiplicative structures. If $A$, $B$, $C$, $D$ are $C^\ast$-algebras, then there is an associative internal product
\begin{equation}
KK_0(A,B) \times KK_0(B,C) \to KK_0(A,C)
\end{equation}
and an associative external product
\begin{equation}
KK_0(A,B) \otimes KK_0(C,D) \to KK_0(A\otimes C, B \otimes D).
\end{equation}
In particular, if $\psi : A \to B$ and $\psi' : B \to C$ are $C^\ast$-algebra morphisms, then
\begin{equation}
[\psi ] \times [\psi'] = [\psi' \circ \psi],
\end{equation}
and if $\psi : A \to C$ and $\psi' : B \to D$ are $C^\ast$-algebra morphisms, then 
\begin{equation}
[\psi ] \otimes [\psi'] = [\psi \otimes \psi'].
\end{equation}
See \cite{KasparovJSM1981,KasparovJSM1987} or the textbook \cite{BlackadarBook} for more details.

\begin{remark}{\rm A formal algorithm for the internal product can be formulated in Cuntz's picture of KK-theory, but it cannot be explained here because it requires knowledge of additional structures.  However, for the present context, the special cases mentioned above suffice (see subsection~\ref{Sec:Inv}). A curious reader can find in \cite{ProdanRMP2016} explicit algorithms for other situations that often occur in physical applications. }$\Diamond$
\end{remark}  

To introduce the ring of Fredholm representations of a group $\Gamma$, we focus on $KK_0(C^\ast_r \Gamma, \CM)$. Note that the elements of $KK_0(C^\ast_r \Gamma, \CM)$ are pairs $(\phi_1,\phi_2)$ of algebra morphisms from $C^\ast_r \Gamma$ to $\BM(\Hh)$, the algebra of bounded operators over a separable Hilbert space, such that $\phi_1 -\phi_2$ lands in $\KM(\Hh)$, the subalgebra of compact operators. This difference is erased if we take the quotient of $\BM(\Hh)$ by $\KM(\Hh)$ and, as such, the two morphisms $\phi_1$ and $\phi_2$ are identical as morphisms into the Calkin algebra $\QM(\Hh) \simeq \BM(\Hh)/\KM(\Hh)$. Restricting to the group elements, this provides a representation of the group $\Gamma$ inside the group of invertible elements of $\QM(\Hh)$, that is, in the group of Fredholm operators. Hence, the name of Fredholm representations. The ring of Fredholm representations exists for general locally compact groups but its construction in this general setting requires equivariant KK-theory. The latter is more technical and harder to explain with few words than the ordinary KK-theory outlined above. For this reason, we follow a construction due to Cuntz \cite{CuntzJRAM1983}, which works for discrete groups, finite or not. In this approach, one uses both the internal and external products to define the ring structure. The key is to see that the group morphism $\Gamma \ni \gamma \mapsto (\gamma,\gamma) \in \Gamma \times \Gamma$ can be lifted to a morphism of $C^\ast$-algebras\footnote{This morphism supplies the standard co-product on $C^\ast_r \Gamma$.}
\begin{equation}
C^\ast_r \Gamma \ni q=\sum_\gamma \alpha_\gamma \, \gamma \mapsto \eta(q) = \sum_\gamma \alpha_\gamma \, \gamma \otimes \gamma \in C^\ast_r \Gamma \otimes C^\ast_r \Gamma \simeq C^\ast_r(\Gamma \times \Gamma).
\end{equation} 
Furthermore, the homotopy class $[\eta]$ supplies an element of $KK_0(C^\ast_r \Gamma, C^\ast_r \Gamma \otimes C^\ast_r \Gamma)$. Then, given two Fredholm representations, hence two members $[\phi]$ and $[\phi']$ of $KK_0(C^\ast_r \Gamma, \CM)$, we can first take their external product $[\phi] \otimes [\phi']$, to land in $KK_0(C^\ast_r \Gamma \otimes C^\ast_r \Gamma, \CM)$, and then take the internal product of the result with $[\eta]$ on the left, to land back in  $KK_0(C^\ast_r \Gamma, \CM)$. In this way, one defines the product
\begin{equation}
 KK_0(C^\ast_r \Gamma, \CM) \ni [\phi], [\phi'] \mapsto  [\phi] \cdot [\phi'] : = [\eta] \times ([\phi] \otimes [\phi']) \in KK_0(C^\ast_r \Gamma, \CM),
\end{equation}
which is a commutative binary operation that distributes over abelian group structure of $KK_0(C^\ast_r \Gamma, \CM)$ \cite{CuntzJRAM1983}. 

The resulting ring structure $(KK_0(C^\ast_r \Gamma, \CM),+,\cdot)$ is the ring of Fredholm representations. It supplies the means to investigate the representation theory of a group, this time using homotopy instead of unitary equivalence. We have the following important fact:

\begin{proposition}[\cite{KasparovConspectus}, p.~111] For a finite group, the ring of Fredholm representations and the ordinary ring of representations coincide.
\end{proposition}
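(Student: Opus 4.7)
The plan is to exhibit an explicit isomorphism $\Phi : \Rr \to KK_0(C^\ast_r \Gamma, \CM)$ of rings. Since $\Gamma$ is finite, $C^\ast_r \Gamma$ is a finite-dimensional $C^\ast$-algebra, hence by Wedderburn (equivalently, Peter--Weyl for finite groups) it decomposes as $C^\ast_r \Gamma \simeq \bigoplus_\chi M_{d_\chi}(\CM)$, with one matrix block per irreducible character. Any finite-dimensional unitary representation $\pi : \Gamma \to U(N)$ extends by linearity to a $\ast$-morphism $\tilde \pi : C^\ast_r \Gamma \to M_N(\CM) \hookrightarrow \KM$. I define $\Phi([\pi]) = [(\tilde \pi, 0)]$ and extend additively to virtual representations. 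Well-definedness on unitary equivalence classes is immediate, since a unitary conjugation $U^\ast \tilde \pi_2 U$ is homotopic to $\tilde \pi_2$ through a path of isomorphic Cuntz pairs.

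Next I would verify that $\Phi$ is a ring homomorphism. Additivity with respect to $\oplus$ follows from the fact that $\Phi(\pi_1 \oplus \pi_2)$ and $\Phi(\pi_1) + \Phi(\pi_2)$ are both represented by the direct sum morphism $\tilde\pi_1 \oplus \tilde\pi_2$. For the multiplicative structure, I would unpack the Cuntz definition $[\phi]\cdot[\phi'] = [\eta]\times([\phi]\otimes[\phi'])$: the external product $[\tilde\pi_1]\otimes[\tilde\pi_2]$ is represented by $\tilde\pi_1 \otimes \tilde\pi_2 : C^\ast_r\Gamma \otimes C^\ast_r\Gamma \to \KM\otimes \KM \simeq \KM$, and composing with $\eta(\gamma)=\gamma\otimes\gamma$ on the left produces the morphism $\gamma \mapsto \pi_1(\gamma)\otimes\pi_2(\gamma)$, which is precisely $\widetilde{\pi_1 \otimes \pi_2}$. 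Thus $\Phi([\pi_1]\otimes[\pi_2]) = \Phi([\pi_1])\cdot\Phi([\pi_2])$.

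Finally I would prove that $\Phi$ is bijective. On the representation side, $\Rr$ is the free abelian group on the set of irreducible characters $\{\chi\}$. On the KK side, $KK_0(A,\CM)$ for a finite-dimensional $C^\ast$-algebra $A = \bigoplus_\chi M_{d_\chi}(\CM)$ splits as $\bigoplus_\chi KK_0(M_{d_\chi}(\CM),\CM)$, and each summand is $\ZM$ by Morita invariance ($KK_0(M_d(\CM),\CM)\simeq KK_0(\CM,\CM) \simeq \ZM$). The class $\Phi([\pi_\chi])$ hits the generator of the $\chi$-th summand (essentially the multiplicity of the $\chi$ block in the Fredholm pair), so $\Phi$ matches the canonical bases on the two sides and is a group isomorphism; combined with the ring map property, this completes the proof. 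The main obstacle will be the multiplicativity step: one must carefully check that Cuntz's abstract product $[\eta]\times([\phi]\otimes[\phi'])$ truly computes the pointwise tensor product of representations, which requires a concrete handle on the internal Kasparov product $[\eta]\times(-)$ in the special case where $\eta$ is an honest $\ast$-homomorphism (where it reduces to ordinary composition), but once this identification is made the rest is essentially a Wedderburn bookkeeping.
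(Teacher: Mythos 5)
Your argument is correct, and it is worth noting that the paper does not actually prove this statement --- it is quoted from Kasparov's conspectus --- so your proposal supplies a proof where the paper offers only a citation. The route you take is the natural one and all the ingredients check out: the Wedderburn decomposition $C^\ast_r\Gamma \simeq \bigoplus_\chi M_{d_\chi}(\CM)$, the assignment $[\pi]\mapsto[(\tilde\pi,0)]$, additivity from the direct-sum description of ``$+$'' in the Cuntz picture, and bijectivity from finite additivity of $KK_0(-,\CM)$ in the first variable together with Morita invariance $KK_0(M_d(\CM),\CM)\simeq\ZM$, the generator being exactly the class of the identity representation of the $d$-dimensional block. Two small remarks. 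First, the ``main obstacle'' you flag at the end is less of an obstacle than you fear: since every class in sight is represented by an honest $\ast$-homomorphism paired with $0$, the only facts about the Kasparov product you need are $[\psi]\times[\psi']=[\psi'\circ\psi]$ and $[\psi]\otimes[\psi']=[\psi\otimes\psi']$, which are precisely the special cases the paper records in Section 4.3; the computation $(\tilde\pi_1\otimes\tilde\pi_2)\circ\eta\,(\gamma)=\pi_1(\gamma)\otimes\pi_2(\gamma)=\widetilde{\pi_1\otimes\pi_2}(\gamma)$ then closes the multiplicativity step with no further input. Second, your well-definedness argument can be stated even more simply: a unitary intertwiner between two $N$-dimensional representations lives in the connected group $U(N)$, so conjugation gives a genuine homotopy of morphisms into $M_N(\CM)\subset\KM$, with no need for a stabilization or rotation trick. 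With those two points made explicit, the proof is complete.
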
 

As such, $KK_0(C^\ast_r \Gamma, \CM)$ is generated by the homotopy classes $[\pi_\chi:C^\ast_r \Gamma \to \KM]$ of the irreducible representations of the group, as lifted to $C^\ast \Gamma$. The main point here is that we replaced the unitary equivalence with the homotopy equivalence of representations. This, for example, gives us automatic assurance that the representation~\eqref{Eq:PiLambda} stays in the same class as long as the projection $p_\lambda$ varies continuously. This is the case if the dynamical matrix is deformed continuously and the eigenvalue $\lambda$ stays isolated. Furthermore, the homotopy classes of the representations act as topological charges for the energy levels. Indeed, when two or more energy levels collide and split again during a deformation, we can be sure that the net topological charge carried by the levels is conserved. Thus, the topological charges are never lost but are redistributed during collisions of energy levels.

\subsection{The K-theoretic group} K-theory is the art of classifying projections with respect to stable homotopy. Textbooks offering introductions to the subject are the monograph by Efton Park \cite{ParkBook}, for topological spaces, and the one by Rordam and Larsen \cite{RordamBook}, for $C^\ast$-algebras. We recommend they be read in the order we mentioned them.

A projection from a generic $C^\ast$-algebra is an element $p$ with the properties $p^2=p^\ast=p$. Given the group $\Gamma$, we have seen that the $C^\ast$-algebra covering all dynamical matrices of molecules from the class associated to $\Gamma$ is $\KM \otimes C^\ast_r \Gamma$. We will organize the projections from this algebra by the homotopy equivalence which says that $p \sim_h p'$ iff there exists a continuous family $p(t)$ of projections in $\KM \otimes C^\ast_r \Gamma$ such that $p(0)=p$ and $p(1)=p'$. Note that along such homotopy, the number of engaged degrees of freedom can vary. Thus, the stated equivalence relation is stronger than the homotopy equivalence defined for a fixed number of degrees of freedom, {\it i.e.} with deformations occurring in $M_{D \times D}(\CM) \otimes C^\ast_r \Gamma$ with a fixed $D$. The terminology of stable homotopy is used to convey this important fact and the equivalence class of a projection is usually denoted by $[p]_0$. To define an additive structure on the set $K_0(C^\ast_r \Gamma)$ of these classes, one observes that any projection from $\KM \otimes C^\ast G$ actually comes from the subalgebra $M_{N \times N}(\CM) \otimes C^\ast_r \Gamma$, for some finite $N$ (see {\it e.g.} \cite{ProdanRMP2016}[Prop.~4.5]). If $p \in M_{N\times N}(\CM) \otimes C^\ast_r \Gamma$ and $p' \in M_{K\times K}(\CM) \otimes C^\ast_r \Gamma$ are two such projections, then 
\begin{equation}
    \begin{pmatrix} p & 0 \\ 0 & p' \end{pmatrix} \in M_{(N+K)\times (N+K)}(\CM) \otimes C^\ast_r \Gamma \subset \KM \otimes C^\ast_r \Gamma
\end{equation} 
and 
\begin{equation}
[p]_0 + [p']_0 := \left [ \begin{pmatrix} p & 0 \\ 0 & p' \end{pmatrix}\right]_0
\end{equation}
endows $K_0(C^\ast_r \Gamma)$ with a commutative binary operation. The abelian group $(K_0(C^\ast_r \Gamma),+)$ encodes the K$_0$-theory of the group algebra.

K$_0$-theory matches perfectly the following classification principle: two energy levels of two distinct dynamical matrices are topologically equivalent if we can deform the configurations of the synthetic molecules until the spectral projections of the levels coincide, while keeping the energy levels spectrally isolated. It is important to add that the deformations should occur inside the specified class of molecules and that degrees of freedom can be turned on and off during this process. Almost tautologically, the stable homotopy class of a projection, that is $[p]_0$, represents the complete topological invariant associated to $p \in \KM \otimes C^\ast_r \Gamma$.

\subsection{Numerical invariants}
\label{Sec:Inv}

Any projection $p \in \KM \otimes C^\ast_r \Gamma$ sets an algebra morphism from $\CM$ to $\KM \otimes C^\ast_r \Gamma$. Indeed, if we define $\CM \ni c \mapsto c p \in \KM \otimes C^\ast_r \Gamma$, then $c_1 c_2 \mapsto (c_1 c_2) p = (c_1 p) (c_2 p)$, where we used the property $p^2=p$ of a projection. We denote this morphism by the symbol $\tilde p$. According to our short description of $KK$-theory, this morphism defines an element $[\tilde p] \in KK_0(\CM, C^\ast_r \Gamma)$. Additionally, we already seen that a finite dimensional representation $\pi$ of $\Gamma$ defines an element of $KK_0(C^\ast_r \Gamma,\CM)$. The two mentioned elements can be paired via Kasparov internal products of the type
\begin{equation}\label{Eq:KKProd1}
KK_0(\CM, C^\ast_r \Gamma) \times KK_0(C^\ast_r \Gamma,\CM) \to KK_0(\CM,\CM) \simeq \ZM,
\end{equation}
to produce numbers that depend only on the homotopy classes of $p$ and $\pi$, hence topological invariants.

\begin{proposition}
\label{prop:K0labeling}
Let $p$ be a projection from $M_{N\times N}(\CM) \otimes C^\ast_r \Gamma \subset \KM \otimes C^\ast_r \Gamma$ and $\{\pi_\chi\}$ be a complete set of irreducible representations of $\Gamma$. Recall that $\{\pi_\chi\}$ extend to morphisms from $\KM \otimes C^\ast_r \Gamma$ to $\KM$, whose homotopy classes generate a basis for Kasparov's group $KK_0\big (C^\ast_r \Gamma, \CM\big )$. Then the pairings $[\tilde p] \times [\pi_\chi]$ are given by
\begin{equation}\label{Eq:Pairing2}
n_\chi : = [\tilde p] \times [\pi_\chi] = \frac{1}{|\Gamma|}\sum_{\gamma \in \Gamma}  {\rm Tr}\big (U_{\gamma} \pi_L( p)\big ) \, \chi(\gamma) \in \ZM,
\end{equation}
and they supply a maximal set of independent numerical invariants for $p$. Above, ``$\times$'' refers to the Kasparov product~\eqref{Eq:KKProd1} and ${\rm  Tr}$ is the trace over $\KM \otimes \BM(\ell^2(\Gamma))$.
\end{proposition}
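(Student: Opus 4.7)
\medskip

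\noindent\textbf{Proof plan.}  The plan is to unwind the Kasparov product into a concrete composition of $*$-morphisms, then evaluate the resulting trace by Peter--Weyl and invert by character orthogonality.

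\emph{Step 1: reduce the KK-product to composition of morphisms.}  First, I observe that both elements entering the pairing~\eqref{Eq:KKProd1} are represented by genuine $*$-morphisms: $\tilde p:\CM\to\KM\otimes C^\ast_r\Gamma$, $c\mapsto cp$, and $\pi_\chi:\KM\otimes C^\ast_r\Gamma\to\KM$.  For such special KK-classes, the Kasparov internal product is simply the class of the composition, so
\begin{equation}
[\tilde p]\times[\pi_\chi] \;=\; [\pi_\chi\circ\tilde p]\;\in\;KK_0(\CM,\CM)\simeq\ZM.
\end{equation}
The isomorphism $KK_0(\CM,\CM)\simeq\ZM$ sends the class of a morphism $\CM\to\KM$, $c\mapsto cq$ with $q$ a projection, to the rank of $q$.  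Applying this to $c\mapsto c\,\pi_\chi(p)$ gives immediately
\begin{equation}
[\tilde p]\times[\pi_\chi] \;=\; {\rm rank}\bigl(\pi_\chi(p)\bigr) \;=\; \TR\bigl(\pi_\chi(p)\bigr),
\end{equation}
where the last equality is because $\pi_\chi(p)$ is a projection in a matrix algebra.

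\emph{Step 2: express the right-hand side via Peter--Weyl.}  I next rewrite the sum $\frac{1}{|\Gamma|}\sum_\gamma \TR(U_\gamma\pi_L(p))\chi(\gamma)$ in terms of the $\TR(\pi_{\chi'}(p))$'s.  The key ingredient is the Peter--Weyl decomposition $\ell^2(\Gamma)\simeq\bigoplus_{\chi'}\Hh_{\chi'}\otimes\Hh_{\chi'}^\ast$, under which $\pi_L(\gamma)=\bigoplus_{\chi'}\pi_{\chi'}(\gamma)\otimes I$ and the right-regular action is $U_\gamma=\bigoplus_{\chi'}I\otimes\bar\pi_{\chi'}(\gamma)$.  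Writing $p=\sum_{\gamma'}p_{\gamma'}\otimes\gamma'$ and using $\TR=\mathrm{tr}_N\otimes\mathrm{Tr}$ together with the multiplicativity of the trace on tensor factors, a direct computation yields
\begin{equation}
\TR\bigl(U_\gamma\,\pi_L(p)\bigr)
\;=\;\sum_{\chi'}\TR\bigl(\pi_{\chi'}(p)\bigr)\,\overline{\chi'(\gamma)}.
\end{equation}

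\emph{Step 3: invert by character orthogonality.}  Substituting into \eqref{Eq:Pairing2} and applying the orthogonality relation $\frac{1}{|\Gamma|}\sum_\gamma \overline{\chi'(\gamma)}\chi(\gamma)=\delta_{\chi,\chi'}$ collapses the double sum to a single term:
\begin{equation}
\frac{1}{|\Gamma|}\sum_{\gamma\in\Gamma}\TR\bigl(U_\gamma\pi_L(p)\bigr)\chi(\gamma)
\;=\;\TR\bigl(\pi_\chi(p)\bigr),
\end{equation}
which matches the integer extracted in Step~1.  Integrality of $n_\chi$ is then automatic, since the common value is the rank of a projection.

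\emph{Step 4: maximality and independence.}  By the proposition from Subsection~\ref{Sec:TopRepR}, $KK_0(C^\ast_r\Gamma,\CM)$ is freely generated by $\{[\pi_\chi]\}$, so any pairing of $[\tilde p]$ with a class in this group is a $\ZM$-linear combination of the $n_\chi$'s; hence the $n_\chi$'s exhaust the numerical invariants obtainable from the Kasparov pairing.  Their independence follows because the matrix of pairings between a minimal projection $e_{\chi'}$ of the summand $M_{d_{\chi'}}(\CM)\subset C^\ast_r\Gamma$ and $[\pi_\chi]$ is $\delta_{\chi,\chi'}$, so the map $[\tilde p]\mapsto(n_\chi)_\chi$ is a $\ZM$-linear isomorphism $K_0(C^\ast_r\Gamma)\xrightarrow{\sim}\ZM^{|\widehat\Gamma|}$.

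\emph{Expected obstacle.}  The conceptual step is Step~1: one must know that the Kasparov product of two morphism classes is the class of their composition, and that $KK_0(\CM,\CM)\simeq\ZM$ is implemented by rank.  Both facts are standard in Cuntz's picture of KK-theory but are the only places where genuine KK-theoretic input enters; after they are invoked, the remainder is a straightforward Peter--Weyl computation.
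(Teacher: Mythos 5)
Your proposal is correct, and its KK-theoretic skeleton is identical to the paper's: both reduce $[\tilde p]\times[\pi_\chi]$ to the class of the composite morphism $(\mathrm{id}\otimes\pi_\chi)\circ\tilde p$, identify that class under $KK_0(\CM,\CM)\simeq\ZM$ with the rank, i.e.\ the trace, of the projection $(\mathrm{id}\otimes\pi_\chi)(p)$, and are then left with the purely representation-theoretic task of matching this trace with the right-hand side of \eqref{Eq:Pairing2}. It is in that last step that you take a genuinely different route. The paper works forward from the Fourier coefficients: it expands $p=\sum_\gamma p_\gamma\,\gamma$, identifies ${\rm Tr}(p_\gamma)$ with $\tfrac{1}{|\Gamma|}{\rm Tr}\big(U_\gamma\pi_L(p)\big)$ via the trace formula for $\Tt$, and then evaluates ${\rm Tr}\big(\sum_\gamma p_\gamma\otimes\pi_\chi(\gamma)\big)$ directly. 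You work backward: Peter--Weyl gives ${\rm Tr}\big(U_\gamma\pi_L(p)\big)=\sum_{\chi'}{\rm Tr}\big(\pi_{\chi'}(p)\big)\overline{\chi'(\gamma)}$, and character orthogonality inverts the sum. The two computations are equivalent, but yours is slightly more robust at one point: since $U_\gamma$ is the \emph{right} translation while $\pi_L(p)$ acts on the left, ${\rm Tr}\big(U_\gamma\pi_L(p)\big)$ is automatically a class function of $\gamma$ (it equals $\sum_{\gamma'}{\rm tr}\big(p_{\gamma'\gamma\gamma'^{-1}}\big)$, the conjugacy-class average of ${\rm tr}(p_g)$ up to normalization), so the paper's intermediate identity ${\rm Tr}(p_\gamma)=\tfrac{1}{|\Gamma|}{\rm Tr}\big(U_\gamma\pi_L(p)\big)$ really only becomes exact after pairing against the class function $\chi$; your Peter--Weyl formula makes this manifest and never relies on the pointwise version. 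Finally, your Step 4 supplies an actual argument, via the minimal projections of the block decomposition of $\CM\Gamma$, for the independence and maximality of the invariants, which the paper asserts without proof; that is a worthwhile addition.
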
 

\begin{proof} Note that $({\rm id} \otimes \pi_\chi) \circ \tilde p$ is the morphism from $\CM$ to $\KM$ generated by the projection $({\rm id} \otimes \pi_\chi)(p) \in \KM$. We claim that the right side of Eq.~\eqref{Eq:Pairing2} is nothing else but the trace of this projection. According to our observation at the beginning of subsection~\ref{Sec:TopRepR}, $({\rm id} \otimes \pi_\chi) \circ \tilde p$ encodes the Kasparov product $[\tilde p] \times [\pi_\chi]$, which lands in $KK_0(\CM,\CM)$, and the trace applied on $({\rm id} \otimes \pi_\chi)(p)$ is simply the isomorphism mapping $KK_0(\CM,\CM)$ to $\ZM$. Thus, the proof is complete if we can confirm the above claim. Now, suppose first that $p$ is from $C^\ast_r \Gamma$. Then, if $p = \sum_\gamma p_\gamma \, \gamma$, then the coefficients of this expansion are given by (see Remark~\ref{Re:Tau})
\begin{equation}
p_\gamma = \Tt (p \gamma^{-1}) = \frac{1}{|\Gamma|}{\rm Tr} \big (U_\gamma \, \pi_L(p) \big ),
\end{equation}
where the trace is over $\BM(\ell^2(\Gamma))$. More generally, if $p$ is from $M_{N\times N}(\CM) \otimes C^\ast_r \Gamma$, then the coefficients $p_\gamma$ are from $M_{N\times N}(\CM)$ and $p_\gamma = ({\rm id} \otimes \Tt) (p \gamma^{-1})$. Consequently, 
\begin{equation}
{\rm Tr}(p_\gamma) = \frac{1}{|\Gamma|}{\rm Tr} \big (U_\gamma \, \pi_L(p) \big ),
\end{equation}
where the trace on the right is over $\KM \otimes \BM(\ell^2(\Gamma))$. Then
\begin{equation}
{\rm Tr}\Big (({\rm id} \otimes \pi_\chi)(p) \Big ) = {\rm Tr}\Big (\sum_{\gamma \in \Gamma} p_\gamma \otimes \pi_\chi(\gamma) \Big ) = \frac{1}{|\Gamma|}\sum_{\gamma \in \Gamma} {\rm Tr}(U_\gamma \pi_L(p)\big ) {\rm Tr}\big (\pi_\chi(\gamma) \big ),
\end{equation}
and this completes the proof.
\end{proof}

\section{Engineering the Fundamental Models}
\label{Sec:FundModels}

Consider a symmetric Hamiltonian $H$ on $\CM^N \otimes \ell^2(\Gamma)$ with a gap in its spectrum and denote by $P_G$ the spectral projection onto the spectrum below this gap. This projection is often called the gap projection. It defines a projection $p_G \in \KM \otimes C^\ast \Gamma$ and, as we have seen in the previous section, $p_G$ can be stably deformed into a direct sum of projections $\oplus_\chi \oplus_{i=1}^{n_\chi} p_\chi$ that runs over the generating set of the $K_0$-group. At its turn, this assures us that the original gapped Hamiltonian can be stably deformed into $-\oplus_\chi \oplus_{i=1}^{n_\chi} \pi_L(p_\chi)$ without closing the gap. Thus, if we define the Hamiltonians $H_\chi = - \pi_L(p_\chi)$, then, up to continuous deformation, $H$ is equivalent to a direct sum of $H_\chi$'s. In other words, any class of gapped Hamiltonians can represented by  a stack of disconnected models $H_\chi$. Hence, the latter can be rightfully called the fundamental models in the context of $\Gamma$-symmetric models. The goal of this section is to supply techniques that generate explicit expressions for $H_\chi$'s that involve a small number of entries. 

\subsection{Generating the fundamental models}\label{Sec:GenFM}  Our first task is to supply an explicit realization of a generating set $\{p_\chi\}$ of projections for the $K_0$-group. Consider a generic self-adjoint element from $\CM \Gamma$, $h = \sum_{g \in \Gamma} \alpha_\gamma \cdot \gamma$. Since the sum involves $|\Gamma|$ complex coefficients, we will view $h$ as living in the $\CM^{|\Gamma|}$ parameter space. The left-regular representation $\pi_L(h)$ on the Cayley graph of $\Gamma$ has a pure-point spectrum and the eigen-spaces corresponding to the eigenvalues supply irreducible representations for $\Gamma$, unless the coefficients $\alpha_\gamma$ enter in specific relations. In other words, the eigen-spaces $h$ supply irreducible representations for $\Gamma$ except when the set of coefficients $\{\alpha_\gamma\}$ belongs to a sub-manifold of the full parameter space of strictly smaller dimension. As such, if we generate the coefficients $\alpha_\gamma$ from a ball of $\CM^{|\Gamma|}$ via a random process, then, with probability one, the eigen-spaces of $h$ will supply irreducible representations of $\Gamma$. Since each of them shows up in the regular representation of the group, this technique gives us a practical way to sample all irreducible representations of $\Gamma$. If $\lambda_\chi$ and its spectral projector 
\begin{equation}
P_\chi = \sum_{\pi_L(h)\ket{\eta^{(\chi)}_i} = \lambda_\chi \ket{\eta^{(\chi)}_i}} \ket{\eta^{(\chi)}_i}\bra{\eta^{(\chi)}_i} 
\end{equation}
were identified to relate to a representation $\chi$, then the sought projection can be computed as
\begin{equation}\label{Eq:PChi2}
p_\chi = \sum_{\gamma \in \Gamma} \langle e | P_\chi | \gamma \rangle \, \gamma.
\end{equation}
For the proper icosahedral group $I_p$, we don't need to look any further than the adjacency operator $\Delta$ because each irreducible representation of $I_p$  appears as one of its eigen-spaces. While our goal for the section is theoretically completed, the solution we have now is not entirely satisfactory from the practical point of view. Indeed, if we want to implement $H_\chi = - \pi_R(p_\chi)$ experimentally, we will find it quite difficult because each term appearing in the sum~\eqref{Eq:PChi2} requires a physical coupling. Thus, to reach a practical level, we must reduce the number of terms in Eq.~\eqref{Eq:PChi2} as much as possible, without changing the topological class of $H_\chi$. For this purpose, we present two methods. 

The first method rests on the simple observation that the self-adjoint operator $(h-\lambda_\chi)^2$ has a positive spectrum which contains $0$ and the spectral projector corresponding to the $0$ eigenvalue coincides with $P_\chi$. Then we can choose $h_\chi = (h-\lambda_\chi)^2$ if we manage to generate a $h \in \CM G$ with a short coupling range. With the proposed random process introduced above, this can be achieved by sampling the elements with coupling range 1, moving to elements with coupling range 2 if not successful, and so on. In the case of the proper icosahedral group, the adjacency operator~\eqref{Eq:Adjacency} has coupling range 1, hence it is optimal and we can choose $h_\chi = (\Delta - \lambda_\chi)^2$, which has coupling range 2. Its corresponding algebra element has the following simple expression:
\begin{equation}
\label{eqn:toymodel}
\begin{aligned}
h_\chi & = (\tfrac{1}{2} C_5 + \tfrac{1}{2} C_5^{-1} +C_2 +\lambda_\chi)^2 \\
& =  (\tfrac{3}{2} + \lambda^2_\chi) e +\lambda_\chi (\tfrac{1}{2} C_5 + \tfrac{1}{2} C_5^{-1} +C_2) \\
& \quad + \tfrac{1}{4} C_5^2 + \tfrac{1}{4} C_5^{-2} + \tfrac{1}{2} C_5 C_2 +\tfrac{1}{2} C_5^{-1} C_2+ \tfrac{1}{2} C_2 C_5 +\tfrac{1}{2} C_2 C_5^{-1}
\end{aligned}
\end{equation}
This, together with the exact values of $\lambda_\chi$ supplied in Table~\ref{Tb:2}, give the desired expressions of the fundamental models for the proper icosahedral group.

\begin{table}[h!]
\begin{center} 
\begin{tabular}{|c|c|c|c|c|c|}
\hline
\textbf{Irrep} & $A_g$ & $T_{1g}$ & $T_{2g}$ & $G_g$ & $H_g$ \\ \hline
$\mathbf{\lambda_{\chi}}$ & {\small  $-2 $} & {\small  $\frac{1}{4} \left(5-\sqrt{5}\right)$} &  {\small  $\frac{1}{4} \left(\sqrt{5}+5\right)$} &  {\small  $\frac{1}{4} \left(\sqrt{21}+1\right)$} &  {\small $\frac{1}{4} \left(\sqrt{13}+1\right)$} \\ \hline
\end{tabular}
\end{center}
 \caption{\small Selected eigenvalues of the adjacency operator $\Delta$ whose eigen-spaces sample the irreducible representations of the proper icosahedral group $I_p$.}
\label{Tb:2}
\end{table}

\begin{figure}[b!]
\label{fig:gapcrossing}
\centering
\hspace*{-1.15cm}
\includegraphics[scale=.175]{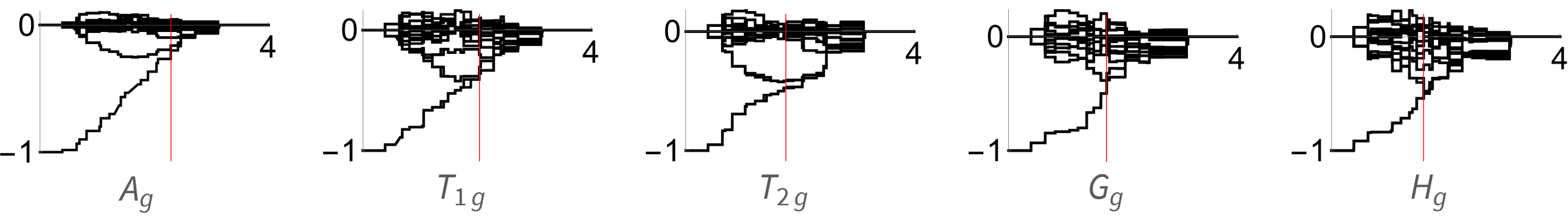}
\caption[short]{Spectrum of $h_{\chi}(t)$ from equation \eqref{eqn:deformation} as function of $t$, for the irreps in Table \ref{Tb:1}. The red line marks the closing of lowest spectral gap. }  
\end{figure}

The method explained above relies on the fact that the spectra of the symmetric operators over the Cayley graph is comprised entirely of discrete eigenvalues. This is always the case for finite groups, but, for infinite groups, the spectra consists of continuum bands. In such cases, we cannot make sense of the method we just presented. Given this fact, we want to explain a second method which works for both finite and infinite discrete groups. Consider the projection $p_\chi = \sum_{\gamma \in \Gamma}  \alpha_\gamma^{(\chi)} \, \gamma$ corresponding to one of the irreducible representations and a smooth cut-off function $\eta_\epsilon(x)$ over the real line, such that $\eta(x) = 1$ for $x<0$ and $\eta(x) =0$ for $x>\epsilon$. Here, $\epsilon$ is a small parameter, fixed at $10^{-3}$ in our simulations. Out of this data, we construct the self-adjoint element
\begin{equation}\label{eqn:deformation}
h_\chi(t) = - \sum_{\gamma \in \Gamma} \eta_\epsilon\big (t-d(e,\gamma)\big ) \, \alpha_\gamma^{(\chi)} \, \gamma \in \CM \Gamma,
\end{equation}
where $d(\gamma,\gamma')$ is the graph distance, which in this case is the great circle metric on the 3d Cayley graph of $I_p$. By varying the parameter $t$ from low to high values, we can continuously turn off coefficients in the standard presentation of $p_\chi$. Focusing now on the spectrum of $\pi_L(h_\chi(t))$ computed in Fig.~\ref{fig:gapcrossing}, we see that it consists of two degenerate eigenvalues $\{0,1\}$ for small values of $t$ because none of the coefficients are affected by the truncation. As $t$ is increased, coefficients are turned off and  the spectrum starts to flow. As long as the original spectral gap remains open, we can be sure that the spectral projection onto the lowest eigenvalue is stably homotopic with $\pi_L(p_\chi)$. As such, we can define the fundamental models to be $h_\chi(t_c + 0^+)$, where $t_c$ is the value of $t$ where the spectral gap closes. Fig.~\ref{Fig:HLoc} assesses the localization of the generated fundamental models.

\begin{figure}[t!]
\centering
\hspace*{-1.15cm}
\includegraphics[width=\textwidth]{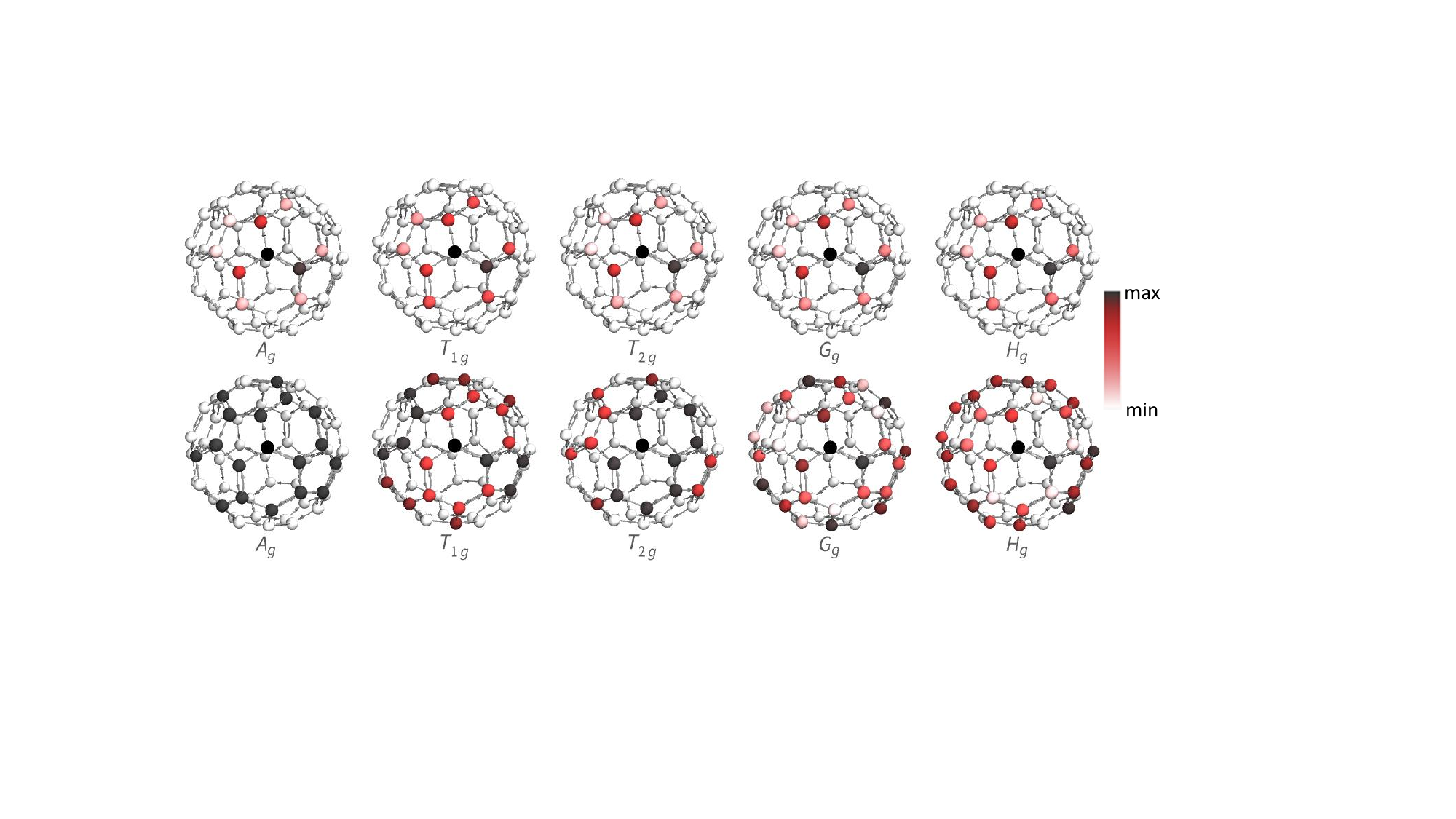}
\caption{(Top row) Coefficients $|\alpha_\gamma|$ of the model Hamiltonians $h_\chi$ from \eqref{eqn:toymodel}. (Bottom row) coefficients $|\alpha_\gamma|$ of the self-adjoint elements $h_\chi(t)$ from \eqref{eqn:deformation}, evaluated at $t$'s right before the lowest spectral gap crossing, as seen in Figure \ref{fig:gapcrossing}.}
\label{Fig:HLoc}
\end{figure}

\subsection{Spectral engineering}
\label{Sec:SpecEng}

\begin{figure}[b!]
    \centering
    \includegraphics[width = 0.4\linewidth]{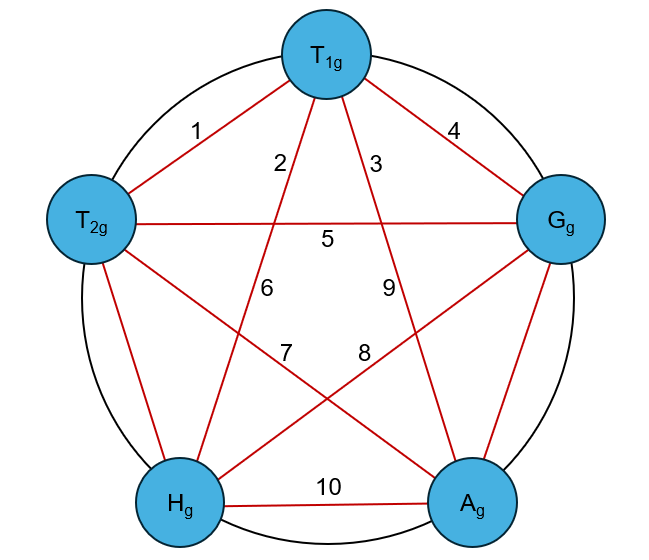}
    \caption{Labelling of the homotopies between the fundamental models $h_\chi$.}
    \label{Fig:LabellingScheme}
\end{figure}

In this subsection, we demonstrate that interpolations between pairs of distinct fundamental models generate topological spectral flows that are robust against perturbations. To start, we perturb $h_\chi$ from equation \eqref{eqn:toymodel} as
\begin{equation}
\label{eqn:compactpertubation}
    h_\chi + s\sum_{\gamma \in \Gamma} \tfrac{1}{2}(\alpha_\gamma + \bar \alpha_{\gamma^{-1}}) \gamma = h_\chi + s K_\chi, 
\end{equation}
where the complex coefficients $\alpha_\gamma$ were drawn randomly from the unit disk and $s$ is a parameter that controls the strength of the perturbation. From here on, all of the perturbed $h_\chi$'s will be rescaled so that their lowest spectral gap is unity. Since there are five topologically distinct fundamental models, we can generate ten distinct homotopies between pairs, as illustrated in Fig.~\ref{Fig:LabellingScheme}. The spectral flows generated by these interpolations are reported in Fig.~\ref{Fig:Interpolation1}. All spectral flows confirm the expected closings of lowest spectral gaps.

\begin{figure}[t!]
    \centering
    \includegraphics[width=0.9\linewidth]{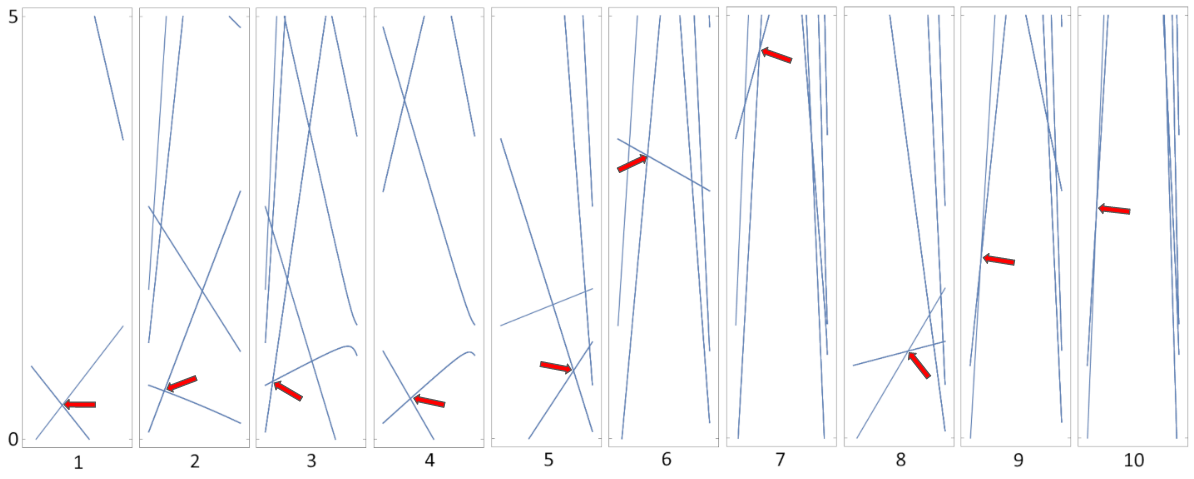}
    \caption{Spectrum of $\lambda (h_\chi + sK_\chi) + (1-\lambda) (h_{\chi'}+sK_{\chi'})$, where $h_\chi$ is as in equation \eqref{eqn:toymodel}. The graphs are labelled according to \ref{Fig:LabellingScheme} and zoomed in so that the changing of the lowest eigenvalue is visible. Here, $s$ is set equal to $0.1$ for all $h_\chi$. Red arrows indicate the first crossing of the two lowest eigenvalues.}
    \label{Fig:Interpolation1}
\end{figure}

To further illustrate the topological robustness of the spectral flows, we perturb $\pi_L(h_\chi)$, again from equation \eqref{eqn:toymodel}, with a diagonal operator $D_\chi$ on $\ell^2(\Gamma)$ with random entries, which represents an onsite disorder potential. Here, $D_\chi$ has been generated independently for each irrep $\chi$. Note that in this case, $\pi_L(h_\chi) + D_\chi$ is no longer in the group algebra, hence the degeneracy of the eigenvalues is broken. By sampling the disorder, the eigenvalues degenerate into bands of spectrum. The spectral flows generated by  interpolations between these disordered models are reported in Fig.~\ref{Fig:Interpolations2}. They again show closings of the lowest spectral gaps. 

\begin{figure}[t!]
    \centering
    \includegraphics[width=0.9\linewidth]{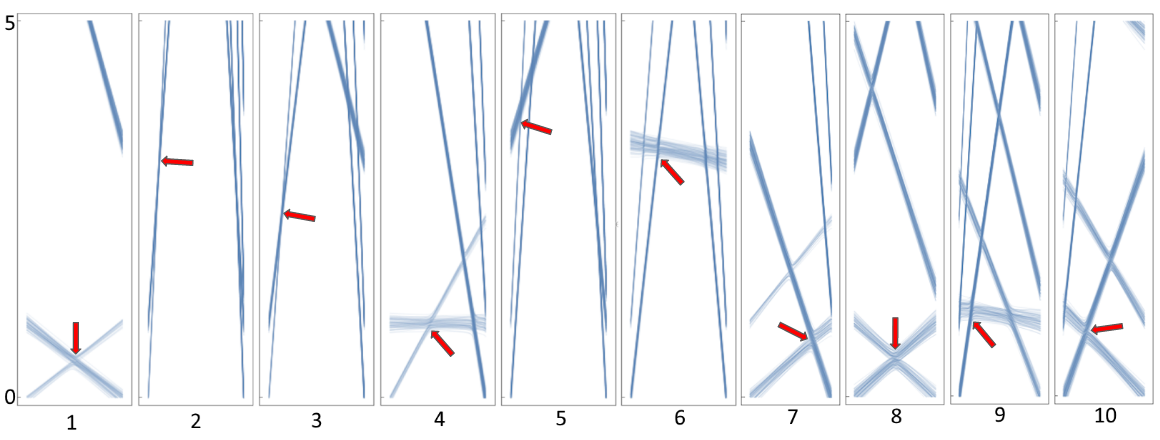}
    \caption{Spectrum of $\lambda (\pi_L(h_\chi) + D_\chi) + (1-\lambda) (\pi_L(h_{\chi'})+ D_{\chi'}) $, where $h_\chi$ is as in equation \eqref{eqn:toymodel}. The graphs are labelled according to \ref{Fig:LabellingScheme}, zoomed in so that the changing of the lowest spectral bands is visible. The elements of the random perturbations $D_\chi$ were sampled from a uniform distribution over $[-0.5,0.5]$. Red arrows indicate the first crossing of the two lowest eigenvalues.}
    \label{Fig:Interpolations2}
\end{figure}


\begin{thebibliography}{1}

\bibitem{Hal1988} F. D. M. Haldane, {\sl Model for a quantum Hall-effect without Landau levels: Condensed-matter realization of the parity anomaly}, Phys. Rev. Lett. {\bf 61}, 2015-2019 (1988).

\bibitem{KaneMele2005I} C.~L.~Kane, E.~J.~Mele, {\sl Quantum spin Hall effect in graphene}, Phys. Rev. Lett. {\bf 95}, 226801 (2005).

\bibitem{KaneMele2005II} C. L. Kane, E. J. Mele, {\sl $\ZM_2$ topological order and the quantum spin Hall effect}, Phys. Rev. Lett. {\bf 95}, 146802 (2005).

\bibitem{BHZ2006} B. A. Bernevig, T. L. Hughes, S.-C. Zhang {\sl Quantum spin Hall effect and topological phase transition in HgTe quantum wells}, Science {\bf 314}, 1757 (2006).

\bibitem{KWB2007} M. Koenig, S. Wiedmann, C. Bruene, A. Roth, H. Buhmann, L. W. Molenkamp, X.-L. Qi, S.-C. Zhang, {\sl Quantum spin Hall insulator state in HgTe quantum wells}, Science {\bf 318}, 766-770 (2007).

\bibitem{MooreBalents2007} J. E. Moore, L. Balents, {\sl Topological invariants of time-reversal-invariant band structures}, 
Phys. Rev. B {\bf 75}, 121306 (2007).

\bibitem{FuKane2007} L. Fu, C. L. Kane, {\sl Topological insulators in three dimensions}, Phys. Rev. B {\bf 76}, 045302 (2007).

\bibitem{HQW2008} D. Hsieh, D. Qian, L. Wray, Y. Xia, Y. S. Hor, R. J. Cava, M. Z. Hasan, {A topological Dirac insulator in a quantum spin Hall phase}, Nature {\bf 452}, 970-974 (2008).

\bibitem{HaldaneRaghu2008} F. D. M. Haldane, S. Raghu, {\sl Possible realization of directional optical waveguides in photonic crystals with broken time-reversal symmetry}, Phys. Rev. Lett. Lett. {\bf 100}, 013904 (2008).
 
 \bibitem{PP2009} E. Prodan, C. Prodan, 
{\sl Topological phonon modes and their role in dynamic instability of microtubules},  Phys. Rev. Lett. {\bf 103}, 248101 (2009).

\bibitem{WangNature2009} Z. Wang, Y. Chong, J. D. Joannopoulos, M. Soljacic, {\sl Observation of unidirectional backscattering-immune topological electromagnetic states}, 
Nature {\bf 461}, 772--775 (2009).

\bibitem{KaneNatPhys2013}
C. Kane, T. Lubensky, {\sl Topological boundary modes in isostatic lattices}, Nature Physics {\bf 10}, 39-45 (2013).

\bibitem{HafeziNatPhot2013} M. Hafezi, S. Mittal, J. Fan, A. Migdall, J. M. Taylor, {\sl Imaging topological edge states in silicon photonics}, Nature Photonics {\bf 7}, 1001-1005 (2013).

\bibitem{NashPNAS2015} L. M. Nash, D. Kleckner, A. Read, V. Vitelli, A. M. Turner, W. T. M. Irvine, 
{\sl Topological mechanics of gyroscopic metamaterials}, 
Proc. Nat. Acad. Sci. {\bf 112}, 14495-14500 (2015).

\bibitem{WuPRL2015} L.-H. Wu, X. Hu, {\sl Scheme for Achieving a Topological Photonic Crystal by Using Dielectric Material}, Phys. Rev. Lett. {\bf 114}, 223901 (2015).

\bibitem{SusstrunkScience2015} R. S{\"u}sstrunk, S. Huber, {\sl Observation of phononic helical edge states in a mechanical topological insulator}, Science  {\bf 349}, 47-50 (2015).

\bibitem{SRFL2008} A.~P.~Schnyder, S.~Ryu, A.~Furusaki, A.~W.~W.~Ludwig, {\sl  Classification of topological insulators and superconductors in three  spatial dimensions}, Phys. Rev. {\bf B  78}, 195125 (2008).

\bibitem{QiPRB2008}  X.-L. Qi, T. L. Hughes, Shou-Cheng Zhang, {\sl Topological field theory of time-reversal invariant insulators}, Phys. Rev. B {\bf 78}, 195424 (2008).

\bibitem{Kit2009} A.~Kitaev, {\sl Periodic table for topological insulators and superconductors}, (Advances in Theoretical Physics: Landau Memorial Conference) AIP Conference Proceedings {\bf 1134}, 22-30 (2009).

\bibitem{RSFL2010}
S.~Ryu, A.~P. Schnyder, A.~Furusaki,  A.~W.~W. Ludwig, {\sl  Topological insulators and superconductors: tenfold way and  dimensional hierarchy}, New J. Phys. {\bf 12}, 065010 (2010).

\bibitem{LuxJMP2024} F. R. Lux, T. Stoiber, S. Wang, G. Huang, E. Prodan, {\sl Topological spectral bands with frieze groups}, J. Math. Phys. 65, 063502 (2024).

\bibitem{Bellissard1986} J. Bellissard, {\sl K-theory of C$^\ast$-algebras in solid state physics}, Lect. Notes Phys. {\bf 257}, 99–156 (1986).

\bibitem{MeslandJGP2024} B. Mesland, E. Prodan, {\sl Classifying the dynamics of architected materials by groupoid methods}, Journal of Geometry and Physics {\bf 196}, 105059 (2024).

\bibitem{SenechalAMM1990} Marjorie Senechal, {\sl Finding the finite groups of symmetries of the sphere}, The American Mathematical Monthly {\bf 97}, 329-335 (1990).

\bibitem{KrotoNature1985} K. W. Kroto, J. R. Heath, S. C. O'Brien, R. Curl, R. E. Smalley, {\sl C60: Buckminsterfullerene}, Nature {\bf 318}, 162-163, (1985).

\bibitem{CoxeterBook} H. S. M. Coxeter . W. O. J. Moser, {\sl Generators and relations for discrete groups}, (Springer, Berlin, 1972).

\bibitem{MeierBook} J. Meier, {\sl Groups, Graphs and Trees An Introduction to the Geometry of Infinite Groups}, (Cambridge University Press, Cambridge, 2008).

\bibitem{LuxAHP2024} F. R. Lux, E. Prodan, {\sl Spectral and combinatorial aspects of Cayley-crystals}, Annals of Henri Poincar\'e {\bf 25}, 3563-3602 (2024).

\bibitem{DavidsonBook} K. R. Davidson, {$C^\ast$-algebras by example}, (AMS, Providence, 1996).








\bibitem{LiboffBook}, R. L. Liboff, {\sl Primer for point and space groups}, (Springer, New York, 2004).

\bibitem{Serre1977LinearRO} J. Serre, {\sl Linear representations of finite groups}, (Springer, New York, 1977).


\bibitem{KasparovConspectus} G. G. Kasparov, {\sl K-theory, group $C^\ast$-algebras, and higher signatures (conspectus)}, Novikov conjectures, index theorems and rigidity, Vol. 1 (Oberwolfach, 1993),
London Math. Soc. Lecture Note Ser., vol. 226, Cambridge Univ. Press, Cambridge, 1995, pp. 101-146.

\bibitem{KasparovJSM1981}  G. G. Kasparov, {\sl The operator K-functor and extensions of $C^\ast$-algebras}, Math. USSR Izvestija {\bf 16}, 513-572 (1981).

\bibitem{KasparovJSM1987} G. G. Kasparov, {\sl Operator K-theory and its applications}, J. Soviet Math. {\bf 37}, 1373–1396 (1987).

\bibitem{CuntzKTheory1987} J. Cuntz, {\sl A new look at KK-theory}, K-Theory {\bf 1}, 31–51 (1987).

\bibitem{BlackadarBook} B.~Blackadar, {\sl K-theory for operator algebras}, volume 5 of Mathematical Sciences Research Institute Publications, (Cambridge Univ. Press, Cambridge, 1998).

\bibitem{ProdanRMP2016}  E. Prodan, H. Schulz-Baldes, {\sl Generalized Connes-Chern characters in KK-theory with an application to weak topological invariants}, Rev. Math. Phys. 28, 1650024 (2016).

\bibitem{CuntzJRAM1983} J. Cuntz, {\sl K-theoretic amenability for discrete groups}, J. Reine Angew. Math. {\bf 344}, 180-195 (1983).

\bibitem{ParkBook} E. Park, {\sl Complex topological K-theory}, (Cambridge U. Press, Cambridge, 2008).

\bibitem{RordamBook} M.~Rordam, F.~Larsen, N.~Laustsen, {\sl An Introduction to K-theory for C$^*$-algebras}, (Cambridge U. Press, Cambridge, 2000).

\end{thebibliography}
\end{document}